\DeclareMathOperator{\poly}{poly}
\DeclareMathOperator{\E}{E}
\DeclareMathOperator{\Var}{Var}
\DeclareMathOperator{\lcm}{lcm}
\newtheorem{theorem}{Theorem}
\newtheorem{prop}[theorem]{Proposition}
\newtheorem{lemma}[theorem]{Lemma}
\newcommand{\app}[1]{\hyperref[app:#1]{Appendix~\ref*{app:#1}}}
\renewcommand{\sec}[1]{\hyperref[sec:#1]{Section~\ref*{sec:#1}}}
\newcommand{\thm}[1]{\hyperref[thm:#1]{Theorem~\ref*{thm:#1}}}
\newcommand{\lem}[1]{\hyperref[lem:#1]{Lemma~\ref*{lem:#1}}}
\newcommand{\prp}[1]{\hyperref[prop:#1]{Proposition~\ref*{prop:#1}}}
\newcommand{\alg}[1]{\hyperref[alg:#1]{Algorithm~\ref*{alg:#1}}}
\newcommand{\cor}[1]{\hyperref[cor:#1]{Corollary~\ref*{cor:#1}}}
\newcommand{\eq}[1]{(\ref{eq:#1})}
\newcommand{\set}[1]{\lbrace #1 \rbrace}
\newcommand{\union}{\cup}
\renewcommand{\setminus}{-}
\newcommand{\ket}[1]{| #1 \rangle}
\newcommand{\novert}{*}
\begin{document}

\title{Quantum Property Testing for Bounded-Degree Graphs}
\author{ 
Andris Ambainis\thanks{Faculty of Computing, University of Latvia. Email: ambainis@lu.lv}
\and
Andrew M.\ Childs\thanks{Department of Combinatorics \& Optimization and Institute for Quantum Computing, University of Waterloo. Email: amchilds@uwaterloo.ca}
\and 
Yi-Kai Liu\thanks{Department of Computer Science, University of California, Berkeley. Email: yikailiu@eecs.berkeley.edu}
}
\date{}
\maketitle

\begin{abstract}
We study quantum algorithms for testing bipartiteness and expansion of bounded-degree graphs.  We give quantum algorithms that solve these problems in time $\tilde O(N^{1/3})$, beating the $\Omega(\sqrt{N})$ classical lower bound.  For testing expansion, we also prove an $\tilde\Omega(N^{1/4})$ quantum query lower bound, thus ruling out the possibility of an exponential quantum speedup.  Our quantum algorithms follow from a combination of classical property testing techniques due to Goldreich and Ron, derandomization, and the quantum algorithm for element distinctness.  The quantum lower bound is obtained by the polynomial method, using novel algebraic techniques and combinatorial analysis to accommodate the graph structure.
\end{abstract}


\section{Introduction}

In \emph{property testing}, one is asked to distinguish between objects that satisfy a property $P$ and objects that are far from satisfying $P$.  The goal is to design algorithms that test properties in sublinear or even constant time, without reading the entire input---a task that is nontrivial even for properties that can be computed in polynomial time.  This is motivated by the practical question of how to extract meaningful information from massive data sets that are too large to fit in a single computer's memory and can only be handled in small pieces.

Testing properties of graphs is an interesting special case.\footnote{Here, the graph can be specified by an adjacency matrix (suitable for dense graphs) or by a collection of adjacency lists (for bounded-degree graphs).}  Many graph properties, such as connectivity and planarity, can be tested in constant time, independent of the number of vertices $N$ \cite{GGR98,GR02}.  However, some graph properties are much harder to test.  For bounded-degree graphs in the adjacency-list representation, the best classical algorithms for testing \textit{bipartiteness} \cite{GR99} and \textit{expansion} \cite{GR00,Czumaj07,Kale07,NS07} use $\tilde O(\sqrt{N})$ queries.\footnote{We use tilde notation to suppress logarithmic factors.}  In fact, this is nearly optimal, as there are $\Omega(\sqrt{N})$ query lower bounds for both problems \cite{GR02}.  As a natural extension, we consider whether these problems can be solved more efficiently using \textit{quantum} queries.

There has been some previous work on quantum property testing.  In particular, there are examples of exponential separations between quantum and classical property testing \cite{BFNR08}, and there are quantum algorithms for testing juntas \cite{AS07}, solvability of black-box groups \cite{IL08}, uniformity and orthogonality of distributions \cite{Bravyi09, Chak10}, and certain properties related to the Fourier transform \cite{Aaronson10, Chak10}.
However, aside from concurrent work on testing graph isomorphism \cite{Chak10}, we are not aware of previous work on quantum algorithms for testing properties of graphs.\footnote{Quantum speedups are known for \emph{deciding} certain graph properties, without the promise that the graph either has the property or is far from having it \cite{Durr04,Magniez07,CK11}.  This turns out to be a fairly different setting, and the results there are not directly comparable to ours.}

Here, we give quantum algorithms for testing bipartiteness and expansion of bounded-degree graphs in time only $\tilde O(N^{1/3})$, beating the $\Omega(\sqrt N)$ classical lower bounds \cite{GR02}.  Moreover, we prove that any quantum algorithm for testing expansion must use $\tilde\Omega(N^{1/4})$ queries, showing that quantum computers cannot achieve a superpolynomial speedup for this problem.


Why might quantum computers offer an advantage for testing bipartiteness and expansion?  The classical algorithms for these problems use random walks to explore the graph, so one might hope to do better by using quantum walks, which are a powerful tool for searching graphs \cite{San08}.  In fact, our algorithms use quantum walks indirectly.  The classical algorithm for testing bipartiteness is based on checking whether a pair of short random walks form an odd-length cycle in the graph, thereby certifying non-bipartiteness \cite{GR99}.  The algorithm for testing expansion looks for collisions between the endpoints of short random walks, with a large number of collisions indicating that the walk is not rapidly mixing \cite{GR00}.  In both cases, the property is tested by looking for collisions among a set of $\tilde O(\sqrt{N})$ items.  By using the quantum walk algorithm for element distinctness \cite{Amb07,MNRS07} to look for these collisions, we can solve the problem using $\tilde O(N^{1/3})$ quantum queries.
In addition, we show that the above classical algorithms can be derandomized, using $O(\log N)$-wise independent bits.  This yields quantum algorithms that run in time $\tilde O(N^{1/3})$.  


While we have shown a polynomial quantum speedup, one may ask whether an exponential speedup is possible.  Quantum computers can give at most a polynomial speedup for total functions \cite{BBCMW01}, but this limitation does not apply to property testing (and indeed, examples of exponential speedup are known \cite{BFNR08}).
On the other hand, superpolynomial speedup is impossible for symmetric functions \cite{AA11}, even in the case of partial functions such as those arising in property testing.
It is an interesting question whether exponential speedups are possible for testing \textit{graph} properties, which may have significantly less symmetry.

Here we prove that testing expansion requires $\tilde\Omega(N^{1/4})$ quantum queries, thus ruling out the possibility of an exponential speedup.  We use the polynomial method \cite{BBCMW01}---specifically, a technique of Aaronson based on reduction to a bivariate polynomial \cite{Aar02}.  We define a distribution over $N$-vertex graphs with $\ell$ connected components (and with another parameter $M \approx N$), such that each component is an expander with high probability.  With $\ell=1$ component, such graphs are almost surely expanders, whereas graphs with $\ell\ge2$ components are very far from expanders.  Our main technical contribution is to show that the acceptance probability of any $T$-query quantum algorithm, when presented with this distribution, is well-approximated by a bivariate polynomial in $M$ and $\ell$ of degree $O(T \log T)$.  This requires a somewhat involved calculation of a closed-form expression for the acceptance probability as a function of $M$ and $\ell$, using algebraic techniques and the combinatorics of partitions.  Then it follows by known results on polynomial approximation that $\Omega(N^{1/4}/\log N)$ queries are necessary to test expansion.

This proof may be of independent interest since there are very few techniques available to prove quantum lower bounds for property testing.  In particular, the standard quantum adversary method \cite{Amb02} is subject to a ``property testing barrier'' \cite{HLS07}.  Furthermore, graph structure makes it difficult to apply the polynomial method, so our lower bound for testing expansion requires substantial new machinery.  These techniques may be applicable to other problems with graph structure.  Note also that our approach uses very different techniques from the classical lower bounds for testing bipartiteness and expansion \cite{GR02}.  

We are only aware of a few previous lower bounds for quantum property testing: the result that not all languages can be tested efficiently \cite{BFNR08} (which is nonconstructive, using a counting argument), and lower bounds for testing orthogonality and uniformity of distributions \cite{Bravyi09,Chak10} and for testing graph isomorphism \cite{Chak10} (which follow by reduction from the collision problem).  

Despite this progress, there remain many unanswered questions about quantum testing of graph properties.  So far, we have been unable to prove a superconstant lower bound for testing bipartiteness.  More generally, is there any graph property testing problem that admits an exponential quantum speedup?

\vspace{10pt}

In the remainder of this section, we define the model of quantum property testing.  We use the adjacency-list model for graphs with bounded (i.e., constant) maximum degree $d$.  A graph $G=(V,E)$ is represented by a function $f_G\colon V \times \{1,\ldots,d\} \to V \cup \{\novert\}$, where $f_G(v,i)$ returns the $i^{\rm th}$ neighbor of $v$ in $G$, or $\novert$ if $v$ has fewer than $i$ neighbors.  
A quantum computer is provided with a unitary black box that reversibly computes $f_G$ as $\ket{v,i,z} \mapsto \ket{v,i,z \oplus f_G(v,i)}$.
The query complexity of an algorithm is the number of calls it makes to the black box for $f_G$.  

We say that $G$ is \emph{$\varepsilon$-far} from satisfying a property $P$ if one must change at least $\varepsilon n d$ edges of $G$ in order to satisfy $P$.  We say that an algorithm \emph{$\varepsilon$-tests} $P$ if it accepts graphs that satisfy $P$ with probability at least $2/3$, and rejects graphs that are $\varepsilon$-far from satisfying $P$ with probability at least $2/3$.  (More generally, we may consider algorithms that determine whether a graph satisfies $P$ or is $\varepsilon$-far from satisfying a related property $P'$.)

We say that a graph $G$ is an \emph{$\alpha$-expander} if for every $U \subseteq V$ with $|U| \le |V|/2$, we have $|\partial(U)| \ge \alpha|U|$, where $\partial(U)$ is 
the set of vertices in $V \setminus U$ adjacent to at least one vertex of $U$.


\section{Quantum Algorithms for Bipartiteness and Expansion} 

First, recall the classical algorithm for testing bipartiteness \cite{GR99}.  This algorithm performs $T = \Theta(1/\varepsilon)$ repetitions, where during each repetition it chooses a random starting vertex $s$, then does $K = \sqrt{N} \poly(\frac{\log N}{\varepsilon})$ random walks from $s$, each of length $L = \poly(\frac{\log N}{\varepsilon})$, and looks for ``collisions'' where two walks from $s$ reach the same vertex $v$, one after an even number steps, the other after an odd number of steps.  

We derandomize each of the $T$ repetitions separately.  Each repetition uses $n = O(KL\log d)$ bits of randomness.  We claim that it suffices to use $k$-wise independent random bits for some $k = O(L\log d)$.  To see this, consider the analysis given in \cite{GR99}.  Lemma 4.5 of \cite{GR99} states sufficient conditions for the algorithm to find an odd cycle, and hence reject, with high probability.  The proof considers the random variable $X = \sum_{i<j} \eta_{ij}$, where $\eta_{ij}$ is a Boolean random variable that indicates whether walk $i$ collides with walk $j$ while having different parity.  The probability that $X=0$ is upper bounded using Chebyshev's inequality together with bounds on $\E[X]$ and $\Var[X]$.  Note that $\E[X]$ and $\Var[X]$ are linear and quadratic in the $\eta_{ij}$, respectively, so they only depend on sets of at most $O(L\log d)$ random bits.  Thus they are unchanged by substituting $k$-wise independent random bits for some $k = O(L\log d)$.  This reduces the number of random bits required by the algorithm to $O(k \log n) = O(\poly(\frac{\log N \log d}{\varepsilon}))$.

We then combine this derandomized classical algorithm with Ambainis' quantum algorithm for element distinctness \cite{Amb07,MNRS07,Sze04}.  (For details, see \app{bip}.) 
This shows 
\begin{theorem}
There is a quantum algorithm that always returns ``true'' when $G$ is bipartite, returns ``false'' with constant probability when $G$ is $\varepsilon$-far from bipartite, and runs in time $O(N^{1/3} \poly(\frac{\log N}{\varepsilon}))$.
\end{theorem}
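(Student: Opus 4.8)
The plan is to combine three ingredients, each already set up in the excerpt: the classical bipartiteness tester of Goldreich and Ron, the observation that it can be derandomized using $O(\poly(\frac{\log N}{\varepsilon}))$ random bits, and Ambainis' quantum algorithm for element distinctness. The high-level strategy is to take a single repetition of the derandomized classical test, which amounts to searching for a collision (a vertex reached by two of the $K$ walks with opposite parities), and to implement this collision search quantumly rather than by brute force over all $\binom{K}{2}$ pairs.

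First I would fix one of the $T = \Theta(1/\varepsilon)$ repetitions and a choice of the $O(\poly(\frac{\log N}{\varepsilon}))$ random seed bits that drive that repetition. Given the seed, the behavior of walk $i$ from the starting vertex $s$ is a deterministic function of the seed and of the $O(L\log d) = O(\poly(\frac{\log N}{\varepsilon}))$ black-box queries needed to trace that walk; call $g(i)$ the pair $(v_i, p_i)$ consisting of the endpoint of walk $i$ and its parity $p_i \in \{0,1\}$. Computing $g(i)$ costs $\poly(\frac{\log N}{\varepsilon})$ queries to $f_G$. The test rejects iff there exist $i \neq j$ with $v_i = v_j$ but $p_i \neq p_j$; equivalently, defining $h(i) = v_i$, we want a collision $h(i) = h(j)$ with $g(i) \neq g(j)$. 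This is precisely the element-distinctness-type problem that Ambainis' algorithm solves: over a domain of size $K = \sqrt{N}\poly(\frac{\log N}{\varepsilon})$, it finds a colliding pair (or reports none) using $\tilde O(K^{2/3})$ evaluations of the underlying function. Since $K^{2/3} = N^{1/3}\poly(\frac{\log N}{\varepsilon})$ and each evaluation of $g$ costs $\poly(\frac{\log N}{\varepsilon})$ queries, one repetition uses $N^{1/3}\poly(\frac{\log N}{\varepsilon})$ quantum queries, and running all $T = O(1/\varepsilon)$ repetitions and amplifying success probability keeps the total at $O(N^{1/3}\poly(\frac{\log N}{\varepsilon}))$. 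The one-sided error is inherited for free: a bipartite graph has no odd cycle, so no such collision exists and the algorithm always accepts; a graph that is $\varepsilon$-far from bipartite satisfies the hypotheses of Lemma 4.5 of \cite{GR99} in at least a constant fraction of repetitions (for a constant fraction of starting vertices), so with constant probability some repetition finds a collision and the algorithm rejects.

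The main obstacle I expect is not the query-complexity bookkeeping but the subtlety of making the element-distinctness subroutine interact correctly with the \emph{quantum} query model — in particular, ensuring that each evaluation of $g$ is implemented reversibly, that the walk-tracing function is a genuine function of the index $i$ (which is where the derandomization to $k$-wise independent bits is essential: with true randomness the walks would not be a fixed function we can query coherently), and that the time complexity, not merely the query complexity, stays $\tilde O(N^{1/3})$ — this needs the known time-efficient implementations of the element distinctness walk \cite{Amb07,MNRS07,Sze04} and a check that the classical post-processing per step is cheap. The other point requiring care is the composition of error probabilities across the $T$ repetitions and the bounded success probability of the quantum collision finder, handled by standard amplification. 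Since the excerpt defers these details to \app{bip}, the proof in the main text would just assemble the pieces above and cite the appendix for the implementation.
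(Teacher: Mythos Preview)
Your proposal is correct and matches the paper's approach: derandomize each repetition so the walks become a deterministic function of a short seed, then apply the element-distinctness algorithm to search for a parity-mismatched collision, paying $\poly(\frac{\log N}{\varepsilon})$ queries per function evaluation. One small discrepancy: the paper's implementation (\alg{qbip}) takes the domain to be $\{1,\ldots,K\}\times\{1,\ldots,L\}$ and maps $(i,j)$ to the vertex and parity after $j$ steps of walk $i$, because the classical test of \cite{GR99} checks \emph{all} intermediate vertices $w_{ij}$, not just the final endpoint of each walk; your $g(i)$ looks only at endpoints. Since $L=\poly(\frac{\log N}{\varepsilon})$ this does not affect the asymptotic bound, but to align with the analysis actually cited (Lemma~4.5 of \cite{GR99} and the collision condition in \alg{cbip}) you should enlarge the domain to $(i,j)$ pairs.
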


Using similar ideas, we can also give an $\tilde{O}(N^{1/3})$-time quantum algorithm for testing expansion.  We start with the classical algorithm of \cite{GR00}, derandomize it using $k$-wise independent random variables, and apply the quantum algorithm for element distinctness.  There is a slight complication, because we need to count collisions, not just detect them.  However, the number of collisions is small---roughly $O(N^{2\mu})$ where $\mu$ is chosen to be a small constant---so we can count the collisions using brute force.  See \app{exp} 
for details.

\section{Quantum Lower Bound for Testing Expansion}
\label{sec:lb}

\subsection{Overview}
\label{sec:lboverview}

We now turn to lower bounds for testing expansion.  Specifically, we prove

\begin{theorem}
Any quantum algorithm for testing expansion of bounded-degree graphs must use 
$\Omega(N^{1/4}/\log N)$ queries.
\end{theorem}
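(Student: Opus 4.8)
The plan is to apply the polynomial method of \cite{BBCMW01}, in the refined form due to Aaronson \cite{Aar02} that reduces a query lower bound to a bivariate polynomial approximation statement. First I would construct a hard distribution over instances. Fix the degree bound $d$ (say $d=3$) and a parameter $M\approx N$; I will build $N$-vertex graphs as a disjoint union of $\ell$ pieces, where each piece is a random $d$-regular (or near-regular) graph on roughly $M/\ell$ of the ``slots,'' chosen so that with high probability every piece is an $\alpha$-expander for a fixed constant $\alpha$. When $\ell=1$ the graph is (almost surely) an expander; when $\ell\ge 2$ it has a cut of size $0$, hence is as far as possible from being an expander, so any tester must separate $\ell=1$ from $\ell\ge 2$. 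The role of the auxiliary parameter $M$ is to make the local structure seen by the algorithm depend smoothly on $M$ and $\ell$: each query touches one vertex-slot and returns a neighbor drawn (in the relevant limiting sense) from a distribution governed by $M$ and $\ell$.

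The heart of the argument is to show that the acceptance probability $P(M,\ell)$ of any $T$-query quantum algorithm, when run on this distribution, agrees with a bivariate polynomial in $M$ and $\ell$ of degree $O(T\log T)$ up to a small additive error. By \cite{BBCMW01}, the acceptance probability on a \emph{fixed} graph is a degree-$2T$ polynomial in the $0/1$ variables encoding $f_G$; averaging over the distribution, $P(M,\ell)$ becomes a sum over ``query transcripts'' (sequences of at most $2T$ slot/answer pairs) weighted by the probability that the random graph is consistent with that transcript. The key lemma I would prove is that this consistency probability, as a function of $M$ and $\ell$, is a ratio of products of linear factors (falling factorials arising from sampling edges without replacement within and across components), and that after summing over the combinatorially many transcripts of a given shape the result is exactly — or up to exponentially small error — a polynomial of degree $O(T\log T)$ in $(M,\ell)$. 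This is where the ``algebraic techniques and combinatorics of partitions'' enter: one organizes transcripts by the partition of their touched slots into components, evaluates the per-class contributions in closed form, and uses symmetric-function identities to collapse the sum. I would isolate the statement ``$P(M,\ell)$ is $\epsilon$-approximated by a degree-$O(T\log T)$ bivariate polynomial'' as a proposition and prove it in stages: (i) closed form for fixed transcript, (ii) summation over transcripts with a fixed component-partition, (iii) summation over partitions.

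Finally I would invoke a known polynomial approximation bound: a bivariate polynomial $q(M,\ell)$ of degree $D$ that is bounded on the relevant grid and separates the slice $\ell=1$ from the slice $\ell\ge 2$ (for an appropriate range of $M\approx N$) must have $D=\Omega(N^{1/4})$; this follows by fixing a suitable one-dimensional restriction (e.g.\ a curve through the grid) and applying the classical Paturi/Nisan–Szegedy-type bound for the relevant threshold-like function, exactly as in Aaronson's collision lower bound \cite{Aar02}. Combining with $D=O(T\log T)$ gives $T=\Omega(N^{1/4}/\log N)$. The main obstacle I expect is step (ii)–(iii): controlling the sum over the super-polynomially many query transcripts and showing that the cancellations leave a \emph{low-degree} polynomial rather than merely an analytic function — in particular handling transcripts that revisit slots, that straddle several components, or whose induced subgraph is not a forest, and verifying that the $M\approx N$ regime makes the ``without replacement'' corrections negligible (or exactly absorbable) so the degree in $\ell$ does not blow up. A secondary technical point is ensuring the expander-with-high-probability claim for the random components is quantitatively strong enough (failure probability $o(1/T)$, say) that it does not interfere with the polynomial structure.
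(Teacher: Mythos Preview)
Your outline matches the paper's strategy, but one expectation is off in a way that matters. The averaged acceptance probability is \emph{not} a polynomial in $(M,\ell)$, even up to exponentially small error: what the partition combinatorics actually yield (the paper's \lem{apx}) is a rational function $f(M,\ell)/g(M,\ell)$ with $\deg f,\deg g=O(T\log T)$, where $g$ is a product of factors $(M-(2k-1)\ell)$ with the factor $(M-(2k-1)\ell)$ appearing at most $2T/k$ times. There is then a \emph{separate} approximation step in which $g(M,\ell)$ is replaced by the constant $(aN)^{\deg g}$; only after this is the expression a genuine polynomial, and the replacement has multiplicative error $1+o(1)$ \emph{only} on the restricted grid $\ell\le\delta=\Theta(\sqrt N)$, $|M-aN|\le\delta^{3/2}$, and only under the assumption $T=O(N^{1/4}/\log N)$. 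The argument is thus self-referential: assume $T$ is small, get a polynomial approximation valid on a grid of side $\Theta(\sqrt N)$, and apply the bivariate degree bound (\lem{scott}) there to reach a contradiction. Your phrase ``exactly---or up to exponentially small error---a polynomial'' skips this, and your step~(ii)--(iii) as stated would not produce a polynomial.

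A second point concerns the construction. The paper does not take a random $d$-regular graph on $N/\ell$ vertices per block; it builds an $M$-vertex graph $G'$ as a union of $c$ perfect matchings on each of $\ell$ equal blocks and then lets $G$ be the induced subgraph on $N$ vertices, each assigned to a uniformly random block. This two-stage device is precisely what makes the per-monomial probabilities factor cleanly as $\ell^{-(v-1)}\prod_j\prod_{j'}(M/\ell-(2j'-1))^{-1}$, giving the denominator its controlled structure. The hardest part of the combinatorics (Proposition~\ref{prop:2}) is then showing that, after putting everything over a common denominator, the numerator is divisible by enough powers of $\ell$ to cancel the $\ell^{(\sum v_i)-|L|}$ that would otherwise sit in the denominator; this divisibility relies on the matching model and on a M\"obius-type identity over the refinement lattice of set partitions (Proposition~\ref{prop:part}). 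A generic random-regular-graph model would not give these formulas, so your ``say $d=3$'' and ``random $d$-regular'' choices would likely stall at step~(i).
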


\begin{proof}
We consider random graphs $G$ on $N$ vertices, sampled from the following distribution $P_{M, l}$ (where $M \ge N$ and $l$ divides $M$):
\begin{enumerate}
\item
We start by constructing a random graph $G'$ on $M$ vertices, as follows:
First, we partition the vertices into $l$ sets $V_1, \ldots, V_l$, 
with each set $V_i$ containing $M/l$ vertices.
Then, on each set $V_i$, we create a random subgraph by randomly choosing $c$ perfect matchings on $V_i$ and taking their union. 
(Here $c$ is some sufficiently large constant.)
\item
We then construct $G$ as follows:
First, we pick a subset of vertices $v_1,\ldots,v_N$ from $G'$.  To pick $v_1$, we choose one of the sets $V_1, \ldots, V_l$ uniformly at random, call it $V_j$, and we let $v_1$ be a random vertex from $V_j$. For each subsequent vertex $v_i$, we again select a set $V_j$ uniformly at random, and choose $v_i$ uniformly at random among those vertices of $V_j$ that were not chosen in the previous steps.
Then we let $G$ be the induced subgraph of $G'$ on $v_1,\ldots,v_N$.
\end{enumerate}

The process above fails if we try to choose more than $M/l$ vertices from the same $V_j$. However, the probability of that happening is small---on average, $N/l$ vertices are chosen in each $V_j$. 
We choose $M= (1+\Theta(N^{-0.1})) N$. 
Then a straightforward application of Chernoff bounds 
implies that the process fails with probability at most $e^{-\Omega(N^{0.55})}$.
For more detail, see 
\sec{fail}.

Note that the resulting graph $G$ has degree at most $c$.  The reason for choosing $G$ as a subgraph of $G'$ (rather than constructing $G$ directly) is that this leads to simpler formulas for the probabilities of certain events, e.g., the probability that vertices $v_1$, $v_2$ and $v_3$ all belong to the same component of $G$ is $1/l^2$.  This seems essential for our use of the polynomial method.

If $l=1$, then this process generates an expander with high probability. 
It is well known \cite{P73,MR} that the graph on $M$
vertices generated by taking $c$ perfect matchings is an expander with high probability.
In \sec{expand}, we show that the subgraph that we 
choose is also an expander. (Informally, the main reason is that only a $\Theta(N^{-1/4})$
fraction of the vertices of $G'$ are not included in $G$. This allows us to carry out the
proof of \cite{P73,MR} without substantial changes.)

If $l=2$, then this process generates a disconnected graph with two connected components, each of size roughly $N/2$. Such a graph is very far from any expander graph---specifically, for any $\alpha'$, it is at least about $(\alpha'/2d)$-far from an $\alpha'$-expander of maximum degree $d$.

Therefore, if a quantum algorithm tests expansion, it must accept a random graph generated according to $P_{M, 1}$ with probability at least $2/3$, and a random graph generated according to $P_{M, 2}$ with probability at most $1/3$. (Graphs drawn from $P_{M, l}$ with $l>2$ must also be accepted with probability at most $1/3$, although this fact is not used in the analysis.)

The strategy of the proof is as follows.  We show that for any quantum algorithm run on a random graph from the distribution $P_{M,l}$, the acceptance probability of the algorithm can be approximated by a bivariate polynomial in $M$ and $l$, where the number of queries used by the algorithm corresponds to the degree of this polynomial.  (This is our main technical contribution.)  We then lower bound the degree of this polynomial.  


In more detail, we will prove the following lemma (see \sec{apx}):
\begin{lemma}
\label{lem:apx}
Let $A$ be a quantum algorithm using $T$ queries.
The acceptance probability of $A$ 
(for the probability distribution $P_{M, l}$) is approximated 
(up to an additive error of $e^{-\Omega(N^{0.55})}$) by a fraction 
$\frac{f(M, l)}{g(M, l)}$, where $f(M, l)$ and $g(M, l)$ are polynomials of degree 
$O(T \log T)$ and $g(M, l)$ is a product of factors $(M-(2k-1)l)$ for $k\in\{1, \ldots, T\}$, with $(M-(2k-1)l)$ occurring at most $2T/k$ times.
\end{lemma}


Now choose $a=1+\Theta(N^{-0.1})$ such that $aN$ is even.
We say that a pair $(M, l)$ is \textit{$\delta$-good} if 
$M \in [ aN - \delta^{3/2}, aN + \delta^{3/2}]$, $l\leq \delta$, and $l$ divides $M$.

We then approximate the fraction $\frac{f(M, l)}{g(M, l)}$ (from \lem{apx}) by $\frac{f(M, l)}{(aN)^{\deg g(M, l)}}$. For each term $M-(2k-1)l$, we first replace it by $M$ and then by $aN$. The first step introduces multiplicative error of
$1-\frac{(2k-1)l}{M} \ge 1- \frac{2kl}{N} \approx e^{-2kl/N}$.
For all terms together, the error introduced in this step is at most
$\prod_{k=1}^T ( e^{-2kl/N} )^{2T/k} = 
e^{-4 T^2 l/N}$.
If $T=O(N^{1/4}/\log N)$ and $l=O(N^{1/2})$, the multiplicative error is
$1-o(1)$.

The second approximation step introduces multiplicative error of
\begin{align*}
(\tfrac{M}{aN})^{O(T \log T)}
\approx ( e^{(M-aN)/aN} )^{O(T \log T)}
\leq ( e^{\delta^{3/2}/aN} )^{O(T \log T)}.
\end{align*}
If $\delta=O(N^{1/2})$ and $T=O({N^{1/4}}/{\log N})$, this can be upper bounded by $1+\epsilon$ for arbitrarily small $\epsilon>0$, by appropriately choosing the big-$O$ constant in $T=O({N^{1/4}}/{\log N})$.


Next, we prove a second lemma, which lower bounds the degree of a bivariate polynomial:
\begin{lemma}
\label{lem:scott}
Let $f(M, l)$ be a polynomial such that $|f(aN, 1)-f(aN, 2)| \geq \epsilon$ for some fixed $\epsilon>0$ and,
for any $\delta$-good $(M, l)$, $|f(M, l)| \le 1$.
Then the degree of $f(M, l)$ is $\Omega(\sqrt{\delta})$. 
\end{lemma}
The proof of this lemma follows the collision lower bounds of Aaronson and Shi 
\cite{Aar02,Shi02} and is included in \sec{as} 
for completeness.


We now set $\delta=\Theta(N^{1/2})$ and apply \lem{scott} to $\frac{f(M, l)}{2(aN)^{\deg g(M, l)}}$.  This is a polynomial in $M$ and $\ell$, because the denominator is a constant.
With $M=aN$, its values at $l=1$ and $l=2$ are bounded away from each other by at least $1/3$ since the algorithm works.
Its values at $\delta$-good pairs $(M,l)$ have magnitude at most $1$ because the acceptance probability of the algorithm is in $[0,1]$, so $|\frac{f(M, l)}{2(aN)^{\deg g(M, l)}}| \le \frac{1}{2} + o(1)$.
Thus we find that the degree of $f(M, l)$ must be $\Omega(N^{1/4})$.
It follows that $T = \Omega({N^{1/4}}/{\log N})$ queries are necessary.
\end{proof}


\subsection{Proof of \lem{apx}}
\label{sec:apx}

Here we assume that the process generating a graph $G$ from the probability
distribution $P_{M, l}$ does not fail. (The effect of this process possibly 
failing is considered in \sec{fail}.) 
The acceptance probability of $A$ is a polynomial $P_A$ of degree 
at most $2T$ in Boolean variables $x_{u, v, j}$, where
$x_{u, v, j}=1$ iff $(u, v)$ is an edge in the $j^{\rm th}$ matching.

$P_A$ is a weighted sum of monomials.  
It suffices to show that the expectation of every such monomial has the rational form described in \lem{apx}.
If this is shown, then $\E[P_A]$ is a sum of such fractions:
$\E[P_A]= \frac{f_1(M, l)}{g_1(M, l)} + \frac{f_2(M, l)}{g_2(M, l)} + \cdots$.
We put these fractions over a common denominator, obtaining $\E[P_A]=\frac{f(M, l)}{g(M, l)}$ where 
$g(M, l)=\lcm(g_1(M, l), g_2(M, l), \ldots)$.
In this common denominator, $(M-(2k-1)l)$ occurs at most $2T/k$ times. Therefore, the degree of $g(M, l)$
is at most
$2T \sum_{k=1}^{2T} \frac{1}{k} = O(T \log T)$.
Similarly, the degree of $f(M,l)$ is at most $O(T\log T) + \deg g(M,l) = O(T \log T)$.

Now consider a particular monomial 
$P=x_{u_1, v_1, j_1} x_{u_2, v_2, j_2} \cdots x_{u_d, v_d, j_d}$, where $d = \deg P$. 
Let $G_P$ be the graph with edges $(u_1, v_1), \ldots, (u_d, v_d)$ (i.e., with the edges relevant to $P$) where the edge $(u_a,v_a)$ comes from the $j_a^{\rm th}$ matching.  
Let $C_1, \ldots, C_k$ be the connected components of $G_P$.  
For each component $C_i$, let $X_i$ be the event that every edge $(u_a,v_a)$ in $C_i$ (viewed as a subgraph of $G_P$) is present in the random graph $G$ as part of the $j_a^{\rm th}$ matching.
We have to find an expression for the expectation
\[
 \E[P]=\Pr[X_1\cap X_2 \cap \ldots \cap X_k] .
\]

We first consider $\Pr[X_i]$.  
Let $v_i$ be the number of vertices in $C_i$, and for each matching $j$, let $d_{i, j}$ be the number of variables $x_{u, v, j}$ in $P$ that have $u, v\in C_i$ and label $j$. Note that 
\begin{equation}
\label{eq:tree} 
d_{i, 1}+d_{i, 2}+\cdots+d_{i, c} \geq v_i -1 
\end{equation}
because a connected graph with $v_i$ vertices must have at least $v_i-1$ edges.
We have
\begin{equation}
\label{eq:xi} 
\Pr[X_i] 
= \frac{1}{l^{v_i-1}} \prod_{j=1}^c \prod_{j'=1}^{d_{i, j}} \frac{1}{{M}/{l}-(2j'-1)} 
= \frac{1}{l^{v_i-1}} \prod_{j=1}^c \prod_{j'=1}^{d_{i, j}} \frac{l}{M-(2j'-1)l} . 
\end{equation}
Here $l^{-(v_i-1)}$ is the probability that all $v_i$ vertices are put into the same set $V_j$ (for some $1\leq j\leq l$) (which is 
a necessary condition for having edges among them), and
$\prod_{j'=1}^{d_{i, j}} \frac{1}{{M}/{l}-(2j'-1)}$
is the probability that $d_{i, j}$ particular edges from the $j^{\rm th}$ matching are present. 
(For the first edge $(u, v)$ in the $j^{\rm th}$ matching, the probability that it is present is $\frac{1}{{M}/{l}-1}$, since $u$ is equally likely to be matched with any of ${M}/{l}$ vertices in $V_j$ except for $u$ itself; for the second edge $(u', v')$ in the $j^{\rm th}$ matching, the probability that it is present is $\frac{1}{{M}/{l}-3}$, since $u'$ can be matched with any of ${M}/{l}$ vertices except $u, v, u'$; and so on.  Note that without loss of generality, we can assume that the edges in $P$ from the $j^{\rm th}$ matching are distinct.  If $P$ contains the same edge twice from the same matching, then we can remove one of the duplicates without changing the value of $P$.)

We can rewrite \eq{xi} as 
$\Pr[X_i] = \frac{1}{l^{v_i-1}} \prod_{j=1}^c R_{d_{i, j}}$, 
where we define 
\begin{equation}
\label{eq:rd} 
R_d = \prod_{j'=1}^{d} \frac{l}{M-(2j'-1)l} . 
\end{equation}
We now extend this to deal with multiple components $C_i$ at once, i.e., we want to evaluate $\Pr[\bigcap_{i\in S} X_i]$, where $S\subseteq \{ 1, \ldots, k\}$.  Let $E_S$ be the event that the vertices in $\bigcup_{i\in S} C_i$ (i.e., in any of the components indicated by $S$) are all put into one set $V_j$.  Then
$\Pr[{\textstyle\bigcap_{i\in S} X_i | E_S}] =  \prod_{j=1}^{c} R_{\sum_{i\in S} d_{i, j}}$. 
The event $E_S$ happens with probability $l^{-(\sum_{i\in S} v_i) +1}$,
since the total number of vertices in $\bigcup_{i\in S} C_i$ is $\sum_{i\in S} v_i$.

Let $L=(S_1, \ldots, S_t)$ be a partition of $\{1, 2, \ldots, k\}$.
We call $S_1, \ldots, S_t$ {\em classes} of the partition $L$. 
We say that $S\in L$ if $S$ is one of $S_1, \ldots, S_t$. 
Let $|L|=t$. We say that $L$ is a refinement of $L'$ (denoted $L< L'$) if $L$ can be obtained
from $L'$ by splitting some of the classes of $L'$ into two or more parts.
We write $L\leq L'$ if $L<L'$ or $L=L'$.
When $L<L'$, let $c_{L, L'}$ be the number of sequences $L=L_0<L_1<\cdots<L_j=L'$, with sequences of even length $j$ counting as $+1$ and sequences of odd length $j$ counting as $-1$.
We define $c_{L,L'} = 1$ when $L=L'$.
We have the following partition identity, which will be useful later; the proof is given in \sec{apxproofdetails}. 

\begin{prop}
\label{prop:part}
Suppose $L''<L$. 
Then $\sum_{L'\colon L''\leq L'\leq L} c_{L', L} = 0$.
\end{prop}

We define the expressions
\begin{align}
\label{eq:fl} 
f_L(M, l)&=
\prod_{S\in L} \prod_{j=1}^c R_{\sum_{i\in S} d_{i, j}} \\
\label{eq:fl1} 
 f'_L(M, l) &= \sum_{L'\colon L'\leq L} c_{L', L} f_{L'}(M, l) .
\end{align}

We can now evaluate $\Pr[X_1\cap X_2 \cap \ldots \cap X_k]$ as follows.  
For any partition $L$ of $\set{1,2,\ldots,k}$, let $E_L$ be the event $\bigcap_{S\in L} E_S$.
Let $E'_L$ be the event that $E_L$ happens but no $E_{L'}$ with $L<L'$ happens
(i.e., $L$ is the least refined partition that describes the event).
Then
\[ \Pr[X_1\cap X_2 \cap \ldots \cap X_k] = 
\sum_L \Pr[E'_L] f_L(M,l) .\]
By inclusion-exclusion,
$\Pr[E'_L] = \sum_{L'\colon L\leq L'} c_{L, L'} \Pr[E_{L'}]$.
Now substitute into the previous equation, reorder the sums, and use the definition of $f'_L(M,l)$:
\[ 
\begin{split}
\Pr[X_1\cap X_2 \cap \ldots \cap X_k] 
 &= \sum_{L'} \Pr[E_{L'}] \sum_{L\colon L\leq L'} c_{L, L'} f_L(M,l) 
  = \sum_L \Pr[E_L] f'_L(M,l).
\end{split}
\]
Note that $\Pr[E_L]=\prod_{S\in L} \Pr[E_S] = \prod_{S\in L} l^{-(\sum_{i\in S} v_i) +1} = l^{-(\sum_{i=1}^k v_i) +|L|}$. 
Thus we have 
\begin{equation}
\label{eq:inc-exc} 
\Pr[X_1\cap X_2 \cap \ldots \cap X_k] = 
\sum_{L} l^{-(\sum_{i=1}^k v_i) +|L|} f'_L(M, l).
\end{equation}

We have now written $\Pr[X_1\cap X_2 \cap \ldots \cap X_k]$ as a sum of rational functions of $M$ and $l$.  We can combine these into a single fraction $\frac{f(M,l)}{g(M,l)}$.  It remains to show that this fraction has the properties claimed in \lem{apx}.

First, we claim that the denominator $g(M,l)$ contains at most $2T/k$ factors of $M-(2k-1)l$.
Observe that each $f_L(M, l)$ is a fraction whose denominator consists of
factors $M-(2k-1)l$.  The number of factors in the denominator is equal to 
the number of variables in the monomial $P$, which is at most $2T$. 
By the form of \eq{rd}, 
for each $M-(2k-1)l$ in the denominator, we also
have $M-l$, $M-3l$, $\ldots$, $M-(2k-3)l$ in the denominator.
Therefore, if we have $t$ factors of $M-(2k-1)l$ in the denominator, then the total
degree of the denominator is at least $tk$. Since $tk\leq 2T$, we have $t\leq {2T}/{k}$.
This statement holds for $f_L(M,l)$ for every $L$.  Thus, when we sum the $f_L(M,l)$ to obtain first $f'_L(M,l)$ and then $\Pr[X_1\cap X_2\cap \ldots \cap X_k]$, and put all the terms over a common denominator $g(M,l)$, this statement also holds for $g(M,l)$.

In $\Pr[X_1\cap X_2 \cap \ldots \cap X_k]$, when we sum the $f'_L(M,l)$ in \eq{inc-exc}, we also have factors of $l^{(\sum_{i=1}^k v_i) -|L|}$ in the denominator. \prp{2} shows that these factors are cancelled out by corresponding factors in the numerator. 
\begin{prop} 
\label{prop:2}
$f'_L(M, l)$ is equal to a fraction  whose denominator is a product of factors $(M-(2k-1)l)$ 
and whose numerator is divisible by $l^{(\sum_{i=1}^k v_i) -|L|}$.
\end{prop}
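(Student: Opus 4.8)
## Proof proposal for Proposition~\ref{prop:2}

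The plan is to combine the partition identity \prp{part} with a ``connected cluster'' decomposition of the factors $R_d$, treating the $d_{i,j}$ as formal variables and working with power series in $r:=l/M$.

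First I would strip the powers of $l$ out of $f'_L$. Writing $f_{L'}=\prod_{S\in L'}g_S$ with $g_S:=\prod_{j=1}^cR_{\sum_{i\in S}d_{i,j}}$, the form \eq{rd} gives $g_S=(l/M)^{\,n_S}\,\xi_S(r)$, where $n_S:=\sum_{j=1}^c\sum_{i\in S}d_{i,j}$ and $\xi_S(r):=\prod_{j=1}^c\prod_{j'=1}^{\sum_{i\in S}d_{i,j}}(1-(2j'-1)r)^{-1}$ is a power series with $\xi_S(0)=1$ and denominator a product of factors $M-(2\kappa-1)l$. Since $\sum_{S\in L'}n_S=\sum_{i=1}^k\sum_{j=1}^cd_{i,j}=:D$ is independent of $L'$,
\[
f'_L=\Bigl(\tfrac{l}{M}\Bigr)^{\!D}\sum_{L'\le L}c_{L',L}\prod_{S\in L'}\xi_S(r).
\]
The denominator assertion of the proposition is then immediate. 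Because $(l/M)^D$ and all these denominators are units at $l=0$, the $l$-adic order of the numerator of $f'_L$ equals $D$ plus the $r$-adic order of $\Omega(r):=\sum_{L'\le L}c_{L',L}\prod_{S\in L'}\xi_S(r)$. As $D\ge\sum_i(v_i-1)=\sum_iv_i-k$ by \eq{tree}, it suffices to show $\operatorname{ord}_{r=0}\Omega\ge k-|L|$, since this forces $l^{\,D+k-|L|}$, hence $l^{\,\sum_iv_i-|L|}$, to divide the numerator.

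Next I would decompose $\log\xi_S$ into ``connected'' pieces. We have $\log\xi_S(r)=\sum_{p\ge1}\frac{r^p}{p}\sum_{j=1}^cP_p\!\bigl(\sum_{i\in S}d_{i,j}\bigr)$, where $P_p(n):=\sum_{j'=1}^n(2j'-1)^p$ is a polynomial in $n$ of degree $p+1$ with $P_p(0)=0$. Group the monomials (in the variables $d_{i,j}$) of each $P_p(\sum_{i\in S}d_{i,j})$ according to the set $A\subseteq S$ of indices $i$ that occur in them; such a monomial lives inside a single matching $j$ and uses every $i\in A$, so it has degree $\ge|A|$, which forces $p\ge|A|-1$. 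Collecting terms over all $p,j$ yields
\[
\log\xi_S(r)=\sum_{\emptyset\ne A\subseteq S}\psi_A(r),
\]
where $\psi_A$ depends only on $A$ (not on $S$) and $\operatorname{ord}_{r=0}\psi_A\ge|A|-1$.

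Finally I would invoke \prp{part}. Expanding $\prod_{S\in L'}\xi_S(r)=\exp\bigl(\sum_{A}\psi_A(r)\bigr)$ — the sum over nonempty $A$ contained in a block of $L'$ — as $\sum_{t\ge0}\frac1{t!}\sum_{(A_1,\dots,A_t)}\psi_{A_1}\!\cdots\psi_{A_t}$ over ordered tuples of such sets, substituting into $\Omega$, and exchanging sums, the coefficient of a tuple $(A_1,\dots,A_t)$ becomes $\sum_{L':\,\pi_0\le L'\le L}c_{L',L}$, where $\pi_0$ is the finest partition in which every $A_l$ lies in one block (the coefficient being $0$ unless $\pi_0\le L$). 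By \prp{part} this sum equals $1$ when $\pi_0=L$ and $0$ otherwise, so $\Omega(r)=\sum_t\frac1{t!}\sum_{(A_l):\,\pi_0=L}\psi_{A_1}\!\cdots\psi_{A_t}$, a sum over tuples whose join is exactly $L$. For any such tuple, the sets $A_l$ lying inside a given block $B$ of $L$ form a connected spanning hypergraph on $B$, so $\sum_{l:\,A_l\subseteq B}(|A_l|-1)\ge|B|-1$ (each hyperedge merges at most $|A_l|-1$ components); summing over the blocks of $L$ gives $\sum_l(|A_l|-1)\ge k-|L|$, whence $\operatorname{ord}_{r=0}(\psi_{A_1}\!\cdots\psi_{A_t})\ge k-|L|$ and thus $\operatorname{ord}_{r=0}\Omega\ge k-|L|$, as required. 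The crux — and the step I expect to be the main obstacle — is exactly this combination: the connected decomposition $\log\xi_S=\sum_A\psi_A$ with the sharp bound $\operatorname{ord}_{r=0}\psi_A\ge|A|-1$, together with the observation that \prp{part} collapses the alternating partition-lattice sum onto precisely the tuples of index-sets whose join is $L$ (a ``linked cluster'' phenomenon). Once that structure is in place, the estimate $\sum_l(|A_l|-1)\ge k-|L|$ is a routine spanning-tree count, and passing from vanishing order in $r$ to divisibility of the polynomial numerator by a power of $l$ is bookkeeping.
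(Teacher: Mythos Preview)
Your argument is correct, and its combinatorial core coincides with the paper's: strip out the obvious factor $l^{D}$, use \prp{part} to show that in the alternating sum over $L'\le L$ only the configurations whose ``join'' partition equals $L$ survive, and then bound the residual order by a spanning/connectivity count giving $\sum_l(|A_l|-1)\ge k-|L|$.

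Where you differ is in the route to that core.  The paper stays with polynomials: it puts the $f_{L'}$ over a common denominator, expands the numerator in elementary symmetric polynomials of the multiset $B(L)\setminus K(L')$, converts those to power sums, and then (via Faulhaber) to the quantities $S_\alpha(L')=\sum_{S\in L'}\sum_j(\sum_{i\in S}d_{i,j})^\alpha$.  The vanishing of the low-order coefficients is isolated as a separate statement (\prp{final}) and proved by exactly your join/connectivity argument.  You instead pass to formal power series in $r=l/M$, take logarithms, and decompose $\log\xi_S$ by the support set $A\subseteq S$; the exponential (linked-cluster) expansion then makes the application of \prp{part} immediate, with no symmetric-function layer in between.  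Your packaging is cleaner and more conceptual, and the sharp bound $\operatorname{ord}_{r=0}\psi_A\ge|A|-1$ falls out directly from $\deg P_p=p+1$.  The paper's version has the minor practical advantage that it never leaves the world of honest polynomials, so the final passage from ``order in $r$'' back to ``$l$-divisibility of the polynomial numerator over the common denominator $\beta_L$'' is not a separate step; in your write-up that passage is the ``bookkeeping'' you allude to, and it does go through (since $\beta_L(M,l)\,\Omega(l/M)=M^{e-D}\bar Q(r)\Omega(r)$ with $\bar Q(0)=1$, so its $r$-order equals that of $\Omega$).
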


When we combine the different $f'_L(M,l)$ in \eq{inc-exc} into a single fraction $\frac{f(M,l)}{g(M,l)}$, we see that $f$ and $g$ have the desired form.  Also note that $f$ and $g$ have degree $O(T\log T)$, by repeating the same argument used earlier to combine the different monomials $P$.  This completes the proof of \lem{apx}; it remains to show \prp{2}.

\begin{proof}[Proof of \prp{2}] 
Note that $R_d$ contains an obvious factor of $l^d$.  We define 
\[ R'_d = \frac{R_d}{l^d} = \prod_{j'=1}^{d} \frac{1}{M-(2j'-1)l}  \]
and we redefine $f_L(M,l)$ and $f'_L(M,l)$ (equations \eq{fl} and \eq{fl1}) using $R'_d$ instead of $R_d$.
This removes a factor of $l^d$ from the numerator of $R_d$ and a factor of
$l^{\sum_{i, j} d_{i, j}}$ from the numerator of $f_L(M,l)$.
By equation \eq{tree}, 
this factor is at least $l^{(\sum_i v_i)-k}$. 
Therefore, it remains to show that the numerator of the redefined $f'_{L}(M, l)$ is divisible by $l^{k-|L|}$.

Recall that $f'_L(M, l)$ is a sum of terms $f_{L'}(M, l)$ for all $L'\leq L$.  Let us write each term as $f_{L'}(M, l) = 1/\prod_{k\in K(L')} (M-kl)$, where $K(L')$ is a multiset.  We put these terms over a common denominator $\beta_L(M,l) = \prod_{k\in B(L)} (M-kl)$, where $B(L) \supseteq K(L')$ for all $L'\leq L$.  Then we have 
\[
f_{L'}(M,l) = \frac{\alpha_{L'}(M,l)}{\beta_L(M,l)}, \qquad
\alpha_{L'}(M,l) = \prod_{k\in B(L)-K(L')} (M-kl),
\]
\[
f'_L(M,l) = \frac{\alpha'_L(M,l)}{\beta_L(M,l)}, \qquad
\alpha'_L(M,l) = \sum_{L'\colon L'\leq L} c_{L',L} \alpha_{L'}(M,l).
\]
Let $m = |B(L)|$.  Also, let $\tilde{m} = |K(L')| = \sum_{S\in L'} \sum_{j=1}^c \sum_{i\in S} d_{i,j} = \sum_{i=1}^k \sum_{j=1}^c d_{i,j}$, which is independent of $L'$.  Let $m' = |B(L)-K(L')| = m-\tilde{m}$, which depends on $L$ but not on $L'$.

We want to show that $\alpha'_L(M,l)$ is divisible by $l^{k-|L|}$.  First, we multiply out each term $\alpha_{L'}(M,l)$ to get 
$\alpha_{L'}(M,l) = \sum_{i=0}^{m'} e_i(B(L)-K(L')) M^{m'-i} (-l)^i$,
where $e_i$ is the $i^{\rm th}$ elementary symmetric polynomial (i.e., $e_i(B(L)-K(L'))$ is the sum of all products of $i$ variables chosen without replacement from the multiset $B(L)-K(L')$).  We can then write $\alpha'_L(M,l)$ as
\[
\alpha'_L(M,l) = \sum_{i=0}^{m'} \theta_{L,i} M^{m'-i} (-l)^i, \qquad
\theta_{L,i} = \sum_{L'\colon L'\leq L} c_{L',L} e_i(B(L)-K(L')).
\]
It suffices to show that, for all $0\leq i\leq k-|L|-1$, the coefficient $\theta_{L,i}$ is 0.
Note that if $L$ is the finest possible partition $L_*$, then $|L| = k$ and the above claim is vacuous, so we can assume that $L_* < L$.  Also note that $\theta_{L,0} = 0$ by \prp{part} with $L'' = L_*$, so it suffices to consider $i>0$.

For any set of variables $E$ and any $a\geq 0$, define the power-sum polynomial $T_a(E) = \sum_{k\in E} k^a$.  We can write $e_i(B(L)-K(L'))$ in terms of power sums:
\[
e_i(B(L)-K(L')) = \Lambda_{i,L}[T_a(B(L)-K(L')) \colon a=0,1,2,\ldots,i], 
\]
where $\Lambda_{i,L}$ is a polynomial function of the power sums $T_a(B(L)-K(L'))$ of total degree $i$ in the variables $k\in B(L)-K(L')$.  Note that the polynomial $\Lambda_{i,L}$ only depends on the size of the set $B(L)-K(L')$, hence it only depends on $L$, and not on $L'$.  To simplify things, we can write $T_a(B(L)-K(L')) = T_a(B(L)) - T_a(K(L'))$ and absorb the $T_a(B(L))$ term into the polynomial $\Lambda_{i,L}$ to get a new polynomial $\tilde{\Lambda}_{i,L}$.  Then we have $e_i(B(L)-K(L')) = \tilde{\Lambda}_{i,L}[T_a(K(L')) \colon a=0,1,2,\ldots,i]$, and 
\[
\theta_{L,i} = \sum_{L'\colon L'\leq L} c_{L',L} \tilde{\Lambda}_{i,L}[T_a(K(L')) \colon a=0,1,2,\ldots,i].
\]
It suffices to show that, for all $0\leq i\leq k-|L|-1$, the above sum vanishes term-by-term, i.e., for all sequences $\set{a_j}$ such that $a_j\geq 0$ and $\sum_j a_j \leq i$, we have 
\begin{equation}
\label{eqn-sum-prod-taj}
\sum_{L'\colon L'\leq L} c_{L',L} \prod_j T_{a_j}(K(L')) = 0.
\end{equation}

We have 
$T_a(K(L')) = \sum_{S\in L'} \sum_{j=1}^c T_a(\set{1,3,5,\ldots,2(\sum_{i\in S} d_{i,j})-1})$, by the definition of $K(L')$. 
Note that, for any integer $s$, $T_a(\set{1,3,5,\ldots,2s-1}) = T_a(\set{1,2,3,\ldots,2s}) - 2^a T_a(\set{1,2,3,\ldots,s})$, and by Faulhaber's formula, this equals a polynomial $Q_a(s)$ of degree $a+1$, with rational coefficients and no constant term.  We have $T_a(K(L')) = \sum_{S\in L'} \sum_{j=1}^c Q_a(\sum_{i\in S} d_{i,j})$.  
Let $q_{a,\alpha}$ ($\alpha=1,\ldots,a+1$) be the coefficients of $Q_a$.  Then we can rewrite this as 
\[
T_a(K(L')) = \sum_{\alpha=1}^{a+1} q_{a,\alpha} S_\alpha(L'), \text{ where }
S_\alpha(L') = \sum_{S\in L'} \sum_{j=1}^c \Bigl(\sum_{i\in S} d_{i,j}\Bigr)^\alpha.
\]
It suffices to show that the sum in equation (\ref{eqn-sum-prod-taj}) vanishes term-by-term, i.e., for all $0\leq i\leq k-|L|-1$ and for all sequences $\set{\alpha_j}$ such that $\alpha_j\geq 1$ and $\sum_j (\alpha_j-1) \leq i$, we have 
\begin{equation*}
\sum_{L'\colon L'\leq L} c_{L',L} \prod_j S_{\alpha_j}(L') = 0.
\end{equation*}
This final claim is shown by \prp{final} in \sec{apxproofdetails2}. 
This completes the proof of \prp{2}.
\end{proof}


\noindent \textbf{Acknowledgments.}
We thank the anonymous referees for several helpful comments.
AA was supported by ESF project 1DP/1.1.1.2.0/09/APIA/VIAA/044,
FP7 Marie Curie Grant PIRG02-GA-2007-224886 and
FP7 FET-Open project QCS.
AMC and YKL acknowledge the hospitality of the Kavli Institute for Theoretical Physics, where this research was supported in part by the National Science Foundation under Grant No.\ PHY05-51164.
AMC received support from MITACS, NSERC, QuantumWorks, and the US ARO/DTO.
YKL received support from an NSF postdoctoral fellowship and ARO/NSA.
This work was done in part while YKL was at the Institute for Quantum Information at Caltech.






\ifthenelse{\equal{0}{0}}{}{ 

\begin{bibdiv}
\begin{biblist}
  
\bib{Aar02}{inproceedings}{
      author={Aaronson, S.},
       title={Quantum lower bound for the collision problem},
        date={2002},
   booktitle={STOC}, 
       pages={635\ndash 642},
      eprint={arXiv:quant-ph/0111102},
}

\bib{Aaronson10}{inproceedings}{
      author={Aaronson, S.},
       title={{BQP} and the polynomial hierarchy},
        date={2010},
   booktitle={STOC}, 
       pages={141-150},
      eprint={arXiv:0910.4698},
}

\bib{AA11}{inproceedings}{
      author={Aaronson, S.},
      author={Ambainis, A.},
       title={The need for structure in quantum speedups},
        date={2011},
   booktitle={Innovations in Computer Science}, 
       pages={338-352},
      eprint={arXiv:0911.0996},
}

\bib{ABI86}{article}{
      author={Alon, N.},
      author={Babai, L.},
      author={Itai, A.},
       title={A fast and simple randomized parallel algorithm for the maximal
  independent set problem},
        date={1986},
     journal={J. of Algorithms},
      volume={7},
      number={4},
       pages={567\ndash 583},
}

\bib{AGHP92}{article}{
      author={Alon, N.},
      author={Goldreich, O.},
      author={H{\r{a}}stad, J.},
      author={Peralta, R.},
       title={Simple constructions of almost k-wise independent random
  variables},
        date={1992},
     journal={Random Structures and Algorithms},
      volume={3},
      number={3},
       pages={289\ndash 304},
}

\bib{Amb02}{article}{
      author={Ambainis, A.},
       title={Quantum lower bounds by quantum arguments},
        date={2002},
     journal={J. of Computer and System Sciences},
      volume={64},
      number={4},
       pages={750\ndash 767},
      eprint={arXiv:quant-ph/0002066},
}

\bib{Amb07}{article}{
      author={Ambainis, A.},
       title={Quantum walk algorithm for element distinctness},
        date={2007},
     journal={SIAM J. on Computing},
      volume={37},
      number={1},
       pages={210\ndash 239},
      eprint={arXiv:quant-ph/0311001},
}

\bib{fullversion}{misc}{
      author={Ambainis, A.},
      author={Childs, A.M.},
      author={Liu, Y.-K.},
       title={Quantum property testing for bounded-degree graphs},
        note={arXiv:1012.3174},
        date={2010}
}

\bib{AS07}{article}{
      author={Atici, A.},
      author={Servedio, R.},
       title={Quantum algorithms for learning and testing juntas},
        date={2007},
     journal={Quantum Information Processing},
      volume={6},
      number={5},
       pages={323\ndash 348},
      eprint={arXiv:0707.3479},
}

\bib{BBCMW01}{article}{
      author={Beals, R.},
      author={Buhrman, H.},
      author={Cleve, R.},
      author={Mosca, M.},
      author={de~Wolf, R.},
       title={Quantum lower bounds by polynomials},
        date={2001},
     journal={J. of the ACM},
      volume={48},
      number={4},
       pages={778\ndash 797},
      eprint={arXiv:quant-ph/9802049},
}

\bib{BDHHMSW05}{article}{
      author={Buhrman, H.},
      author={Durr, C.},
      author={Heiligman, M.},
      author={H{\o}yer, P.},
      author={Magniez, F.},
      author={Santha, M.},
      author={de~Wolf, R.},
       title={Quantum algorithms for element distinctness},
        date={2005},
     journal={SIAM J. on Computing},
      volume={34},
      number={6},
       pages={1324\ndash 1330},
      eprint={arXiv:quant-ph/0007016},
}

\bib{BFNR08}{article}{
      author={Buhrman, H.},
      author={Fortnow, L.},
      author={Newman, I.},
      author={R{\"o}hrig, H.},
       title={Quantum property testing},
        date={2008},
     journal={SIAM J. on Computing},
      volume={37},
      number={5},
       pages={1387\ndash 1400},
      eprint={arXiv:quant-ph/0201117},
}

\bib{Bravyi09}{inproceedings}{
      author={Bravyi, S.},
      author={Harrow, A.W.},
      author={Hassidim, A.},
       title={Quantum algorithms for testing properties of distributions},
   booktitle={STACS},
        year={2010},
       pages={131-142}
}

\bib{Chak10}{inproceedings}{
      author={Chakraborty, S.},
      author={Fischer, E.},
      author={Matsliah, A.},
      author={de Wolf, R.},
       title={New results on quantum property testing},
  booktitle = {FSTTCS},
  year      = {2010},
  pages     = {145-156},
}

\bib{CK11}{inproceedings}{
      author={Childs, A.~M.},
      author={Kothari, R.},
       title={Quantum query complexity of minor-closed graph properties},
   booktitle={to appear in STACS}, 
        year={2011},
      eprint={arXiv:1011.1443},
}

\bib{Czumaj07}{inproceedings}{
      author={Czumaj, A.},
      author={Sohler, C.},
       title={Testing expansion in bounded-degree graphs},
        date={2007},
   booktitle={FOCS}, 
       pages={570\ndash 578},
}

\bib{Durr04}{article}{
      author={Durr, C.},
      author={Heiligman, M.},
      author={H{\o}yer, P.},
      author={Mhalla, M.},
       title={Quantum query complexity of some graph problems},
   journal={SIAM J. on Computing},
   volume = {35},
   number = {6},
       pages={1310 \ndash 1328},        
       eprint={arXiv:quant-ph/0401091},
        date={2006},
}

\bib{Goldreich}{misc}{
      author={Goldreich, O.},
       title={Randomized Methods in Computation},
       note={Lecture notes available at \url{http://www.wisdom.weizmann.ac.il/~oded/rnd.html},
       Lecture 2},
       date={2001},
}

\bib{GGR98}{article}{
      author={Goldreich, O.},
      author={Goldwasser, S.},
      author={Ron, D.},
       title={Property testing and its connection to learning and
  approximation},
        date={1998},
     journal={J. of the ACM},
      volume={45},
      number={4},
       pages={653\ndash 750},
}

\bib{GR99}{article}{
      author={Goldreich, O.},
      author={Ron, D.},
       title={A sublinear bipartiteness tester for bounded degree graphs},
        date={1999},
     journal={Combinatorica},
      volume={19},
      number={3},
       pages={335\ndash 373},
}

\bib{GR00}{misc}{
      author={Goldreich, O.},
      author={Ron, D.},
       title={On testing expansion in bounded-degree graphs},
        note={ECCC report TR00-020},
        date={2000},
}

\bib{GR02}{article}{
      author={Goldreich, O.},
      author={Ron, D.},
       title={Property testing in bounded degree graphs},
        date={2002},
     journal={Algorithmica},
      volume={32},
      number={2},
       pages={302\ndash 343},
}

\bib{HLS07}{inproceedings}{
      author={H{\o}yer, P.},
      author={Lee, T.},
      author={{\v S}palek, R.},
       title={Negative weights make adversaries stronger},
        date={2007},
   booktitle={STOC}, 
       pages={526\ndash 535},
      eprint={arXiv:quant-ph/0611054},
}

\bib{IL08}{inproceedings}{
      author={Inui, Y.},
      author={Gall, F.~Le},
       title={Quantum property testing of group solvability},
        date={2008},
   booktitle={LATIN}, 
      series={Lecture Notes in Computer Science},
      volume={4957},
       pages={772\ndash 783},
      eprint={arXiv:0712.3829},
}

\bib{Kale07}{misc}{
      author={Kale, S.},
      author={Seshadhri, C.},
       title={Testing expansion in bounded-degree graphs},
        note={ECCC report TR07-076},
        date={2007},
}

\bib{Magniez07}{article}{
      author={Magniez, F.},
      author={Santha, M.},
      author={Szegedy, M.},
       title={Quantum algorithms for the triangle problem},
        date={2007},
     journal={SIAM J. on Computing},
      volume={37},
      number={2},
       pages={413-424},
      eprint={arXiv:quant-ph/0310134},
}

\bib{MNRS07}{inproceedings}{
      author={Magniez, F.},
      author={Nayak, A.},
      author={Roland, J.},
      author={Santha, M.},
       title={Search via quantum walk},
        date={2007},
   booktitle={STOC}, 
       pages={575\ndash 584},
      eprint={arXiv:quant-ph/0608026},
}

\bib{MR}{misc}{
      author={Motwani, R.},
      author={Raghavan, P.},
       title={Randomized Algorithms},
        note={Cambridge University Press},
        date={1995},
}

\bib{NS07}{article}{
      author={Nachmias, A.},
      author={Shapira, A.},
       title={Testing the expansion of a graph},
     journal={Information and Computation},
      volume={208},
       issue={4},
        year={2010},
       pages={309-314},
}

\bib{Pat92}{inproceedings}{
      author={Paturi, R.},
       title={On the degree of polynomials that approximate symmetric Boolean functions
(preliminary version)},
        date={1992},
   booktitle={STOC}, 
       pages={468\ndash 474},
}

\bib{P73}{inproceedings}{
      author={Pinsker, M.},
       title={On the complexity of a concentrator},
        date={1973},
   booktitle={Proceedings of the 7th International Teletraffic Conference},
       pages={318/1\ndash 318/4},
}

\bib{San08}{incollection}{
      author={Santha, M.},
       title={Quantum walk based search algorithms},
        date={2008},
   booktitle={Theory and Applications of Models of Computation},
      series={Lecture Notes in Computer Science},
      volume={4978},
       pages={31\ndash 46},
      eprint={arXiv:0808.0059},
}

\bib{Shi02}{inproceedings}{
      author={Shi, Y.},
       title={Quantum lower bounds for the collision and the element distinctness problems},
        date={2002},
   booktitle={FOCS}, 
       pages={513\ndash 519},
      eprint={arXiv:quant-ph/0112086},
}

\bib{Sim97}{article}{
      author={Simon, D.~R.},
       title={On the power of quantum computation},
        date={1997},
     journal={SIAM J. on Computing},
      volume={26},
      number={5},
       pages={1474\ndash 1483},
}

\bib{Sze04}{inproceedings}{
      author={Szegedy, M.},
       title={Quantum speed-up of {M}arkov chain based algorithms},
        date={2004},
   booktitle={FOCS}, 
       pages={32\ndash 41},
      eprint={arXiv:quant-ph/0401053},
}

\end{biblist}
\end{bibdiv}

} 


\appendix


\section{Quantum algorithm for testing bipartiteness}
\label{app:bip}

\subsection{Derandomization}

We recall the classical algorithm for testing bipartiteness \cite{GR99}.  This is based on the fact that a bipartite graph contains no cycles of odd length, whereas if a graph is far from bipartite, then it contains many short odd cycles.  The algorithm tries to find an odd cycle by running several random walks from a common starting vertex $s$ and looking for ``collisions'' where two walks reach the same vertex $v$, one after an even number steps, the other after an odd number of steps.  More precisely, the algorithm performs $T = \Theta(1/\varepsilon)$ repetitions, where during each repetition it chooses a random starting vertex $s$, then does $K = \sqrt{N} \poly(\frac{\log N}{\varepsilon})$ random walks from $s$, each of length $L = \poly(\frac{\log N}{\varepsilon})$.  See \alg{cbip} for a precise description.

\begin{algorithm}[t]
\caption{Testing bipartiteness (classical)}
\label{alg:cbip}
\begin{algorithmic}
\REQUIRE Oracle $f_G$ specifying a graph $G$ with $N$ vertices and max degree $d$;
accuracy parameter $\varepsilon$
\FOR{$\tau=1,\ldots,T$ for some $T = \Theta(1/\varepsilon)$}
  \STATE Pick a random vertex $s$
  \FOR{$i=1,\ldots,K$ where $K = \sqrt{N}\poly(\frac{\log N}{\varepsilon})$}
    \STATE Starting from $s$, take a random walk of length $L = \poly(\frac{\log N}{\varepsilon})$, with steps chosen as follows:
    At vertex $v$, for each adjacent vertex $u$, move to $u$ with probability $\frac{1}{2d}$; stay at $v$ with probability $1-\frac{\deg(v)}{2d}$
    \STATE Let $(w_{ij})_{j=0,1,2,\ldots}$ be the sequence of vertices visited during the walk, omitting consecutive repetitions of the same vertex
    (i.e., when the walk stays at the same vertex for more than one time step,
    only include that vertex once in the sequence, so $w_{ij} \neq w_{i(j+1)}$)
  \ENDFOR
  \IF{$w_{ij} = w_{i'j'}$ for some $i,j,i',j'$, where $j$ is even and $j'$ is odd}
    \RETURN ``false''
  \ENDIF
\ENDFOR
\RETURN ``true''
\end{algorithmic}
\end{algorithm}

This algorithm has the following performance guarantees:

\begin{theorem}[Theorem 2 in \cite{GR99}]
\alg{cbip} always returns ``true'' when $G$ is bipartite and returns ``false'' with probability at least $2/3$ when $G$ is $\varepsilon$-far from bipartite.  The algorithm has running time $\poly(\sqrt{N}\frac{\log N}{\varepsilon})$, and in particular, makes $\poly(\sqrt{N}\frac{\log N}{\varepsilon})$ queries.
\end{theorem}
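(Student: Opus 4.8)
The plan is to prove the two directions of the theorem separately, following the analysis of Goldreich and Ron~\cite{GR99}; for concreteness I refer throughout to \alg{cbip}.

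The completeness direction (``always true when $G$ is bipartite'') is easy, and I would dispatch it first. Fix a $2$-coloring of $G$ with color classes $A$ and $B$. After consecutive repetitions are omitted, each recorded sequence $w_{i0}=s, w_{i1}, w_{i2},\ldots$ is an ordinary walk in $G$: every non-repeating step of the lazy walk moves to an \emph{adjacent} vertex, so $\{w_{ij}, w_{i(j+1)}\}$ is always an edge. Hence $w_{ij}$ lies in the same color class as $s$ exactly when $j$ is even, and a collision $w_{ij}=w_{i'j'}$ with $j$ even and $j'$ odd would place one vertex in both $A$ and $B$, which is impossible. So within every repetition the rejection condition is never triggered, and the algorithm returns ``true.''

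For soundness, fix the notation $p^b_s(v)$ for the probability that a single length-$L$ lazy walk from $s$ ends at $v$ after a number of edge-steps of parity $b\in\{0,1\}$, and put $q_s=\sum_v p^0_s(v)p^1_s(v)$. In the repetition started at $s$, let $\eta_{ii'}$ indicate that walks $i$ and $i'$ share a common endpoint with opposite parities, and set $X_s=\sum_{i<i'}\eta_{ii'}$, so this repetition fails to reject iff $X_s=0$, and $\E[X_s]=\binom{K}{2}q_s$. I would then combine two facts from~\cite{GR99}. First, a \emph{good-vertex lemma}: for a suitable $L=\poly(\tfrac{\log N}{\varepsilon})$, if $G$ is $\varepsilon$-far from bipartite then an $\Omega(\varepsilon)$ fraction of vertices $s$ are \emph{good}, meaning that the length-$L$ walk from $s$ both spreads out ($\max_{v,b}p^b_s(v)\le\poly(\tfrac{\log N}{\varepsilon})/N$) and satisfies $q_s\ge 1/(N\poly(\tfrac{\log N}{\varepsilon}))$. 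Second, a second-moment estimate: for a good $s$ and $K=\sqrt N\poly(\tfrac{\log N}{\varepsilon})$ we get $\E[X_s]\gg1$, while the covariance terms of $X_s$ (the nontrivial ones come from triple collisions and are bounded using the spreading estimate, as in Lemma~4.5 of~\cite{GR99}) give $\Var[X_s]=o(\E[X_s]^2)$, so Chebyshev's inequality yields $\Pr[X_s=0]\le\tfrac{1}{10}$. Conditioning on the $T$ starting vertices: if at least one is good, that repetition alone rejects with probability $\ge\tfrac{9}{10}$, and the probability that all $T=\Theta(1/\varepsilon)$ starting vertices are bad is at most $(1-\Omega(\varepsilon))^T\le\tfrac{1}{10}$ for a suitable constant in $T$; hence a graph $\varepsilon$-far from bipartite is rejected with probability at least $1-\tfrac{1}{10}-\tfrac{1}{10}\ge\tfrac{2}{3}$. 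The query bound $T\cdot K\cdot L=\sqrt N\poly(\tfrac{\log N}{\varepsilon})$ is then immediate, and detecting an opposite-parity collision among the $O(KL)$ recorded (vertex, parity) pairs costs only $O(KL\log(KL))$ time by sorting, so the total running time is $\poly(\sqrt N\,\tfrac{\log N}{\varepsilon})$.

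The main obstacle is the good-vertex lemma, and I would import it essentially verbatim from~\cite{GR99} rather than reprove it. The intuition is that if almost every $s$ had both $q_s$ and $\max_{v,b}p^b_s(v)$ tiny, the even/odd endpoint distributions of short walks would be nearly supported on disjoint vertex sets; one could then color each vertex by the parity class carrying almost all of its walk-mass, stitch together the partial colorings obtained from different base vertices, and produce a global $2$-coloring violating only $o(\varepsilon nd)$ edges, contradicting $\varepsilon$-farness. Making this rigorous requires quotienting $G$ into pieces on which walks of length $\poly(\tfrac{\log N}{\varepsilon})$ mix, discarding the few rarely-reached vertices, and carefully tracking how errors accumulate across pieces; this is precisely where the polylogarithmic slack in $L$ and $K$ is spent, and it is the delicate part of the Goldreich--Ron argument.
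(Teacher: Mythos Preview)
The paper does not prove this theorem at all: it is stated as ``Theorem~2 in~\cite{GR99}'' and imported wholesale from Goldreich and Ron. The only place the paper touches the \emph{content} of the proof is in the derandomization discussion, where it recalls that the analysis of~\cite{GR99} bounds $\Pr[X=0]$ via Chebyshev's inequality applied to $X=\sum_{i<j}\eta_{ij}$, with $\E[X]$ and $\Var[X]$ controlled by single-walk quantities (Lemma~4.5 of~\cite{GR99}). Your sketch is exactly this argument, so it is consistent with both the cited source and the paper's own summary; there is nothing to compare against beyond that.

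One small imprecision worth flagging: you define $\eta_{ii'}$ in terms of the \emph{endpoints} of walks $i$ and $i'$, but \alg{cbip} checks for collisions among \emph{all} recorded positions $w_{ij}$ along each walk (and the paper's quantum version correspondingly indexes the collision search by pairs $(i,j)$, not just $i$). This does not break your argument---one can equivalently view the $j$th prefix of walk $i$ as its own ``walk'' and apply the same second-moment analysis---but your notation $X_s=\sum_{i<i'}\eta_{ii'}$ and the definition of $q_s$ via endpoint probabilities should be adjusted to match the actual rejection condition if you want the sketch to line up literally with \alg{cbip}.
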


Our first step is to partially derandomize \alg{cbip} using $k$-wise independent random variables.\footnote{We say that a collection of random variables is \textit{$k$-wise independent} if any subset of at most $k$ of the variables is independent.}  Intuitively, this is possible because the algorithm (and the analysis of its performance) only depend on the behavior of pairs of random walks, which are determined by subsets of $\poly(\log N)$ random bits.  Derandomization reduces the number of random bits from $O(\sqrt{N} \poly(\log N))$ to $\poly(\log N)$, which in turn reduces the running time of our quantum algorithm.

We use the following simple construction for $k$-independent random variables:
\begin{prop}[Proposition 6.5 in \cite{ABI86}]
\label{prop:k-indep}
Suppose $n+1$ is a power of 2 and $k$ is odd, $k \leq n$.  Then there exists a uniform probability space $\Omega = \set{0,1}^m$ where $m = 1+\frac{1}{2}(k-1)\log_2(n+1)$, and there exist $k$-wise independent random variables $\xi_1,\ldots,\xi_n$ over $\Omega$, such that $\Pr[\xi_j=1] = \Pr[\xi_j=0] = \frac{1}{2}$.  

Furthermore, there exists an algorithm that, given $i \in \Omega$ and $1 \leq j \leq n$, computes $\xi_j(i)$ in time $O(k \log n)$.
\end{prop}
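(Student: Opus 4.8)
The plan is to exhibit the classical BCH/polynomial‑evaluation construction explicitly and then verify the two assertions—$k$‑wise independence and efficient evaluation—directly. Write $k=2s+1$ (so $s\ge 0$) and let $t=\log_2(n+1)$, so that $\mathbb{F}_{2^t}$ has exactly $n+1$ elements. Fix once and for all an injection $j\mapsto\beta_j$ from $\{1,\dots,n\}$ into $\mathbb{F}_{2^t}$, e.g.\ $\beta_j$ the field element whose coordinate vector in a fixed $\mathbb{F}_2$‑basis is the binary expansion of $j$ (so the $\beta_j$ are distinct and nonzero). Take the sample space to be $\Omega=\mathbb{F}_2\times(\mathbb{F}_{2^t})^s$, which as a set is $\{0,1\}^m$ with $m=1+st=1+\tfrac12(k-1)\log_2(n+1)$, under the uniform distribution; write a sample as $\omega=(c_0;c_1,\dots,c_s)$. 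Define
\[
 \xi_j(\omega)\;=\;c_0\;+\;\sum_{i=1}^{s}\mathrm{tr}\!\bigl(c_i\,\beta_j^{\,2i-1}\bigr)\ \in\ \mathbb{F}_2,
\]
where $\mathrm{tr}\colon\mathbb{F}_{2^t}\to\mathbb{F}_2$ is the field trace. Then $\Pr[\xi_j=1]=\Pr[\xi_j=0]=\tfrac12$ will follow from $1$‑wise independence.

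The core step is $k$‑wise independence. Fix distinct indices $j_1,\dots,j_k$ and abbreviate $b_r=\beta_{j_r}$. The map $\Omega\to\mathbb{F}_2^{k}$, $\omega\mapsto(\xi_{j_1}(\omega),\dots,\xi_{j_k}(\omega))$, is $\mathbb{F}_2$‑affine, and a surjective affine map has all fibres of equal cardinality; hence it suffices to show this map is onto, i.e.\ that the only $y=(y_1,\dots,y_k)\in\mathbb{F}_2^{k}$ orthogonal to the image of its linear part is $y=0$. Writing out the orthogonality conditions coordinate by coordinate (the $c_0$ coordinate and the $c_i$ coordinates, $i=1,\dots,s$) and using nondegeneracy of the trace form, I obtain $\sum_r y_r=0$ and $\sum_r y_r\,b_r^{\,2i-1}=0$ for $i=1,\dots,s$, regarding the $y_r$ as elements of $\mathbb{F}_2\subseteq\mathbb{F}_{2^t}$. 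Here is the one genuinely clever point: in characteristic $2$, squaring is additive and fixes $\mathbb{F}_2$, so any relation $\sum_r y_r b_r^{\,p}=0$ implies $\sum_r y_r b_r^{\,2p}=0$; iterating, a relation with exponent $p$ yields relations with every exponent $2^a p$. Since every integer in $\{0,1,\dots,2s\}$ is either $0$ or of the form $2^a m$ with $m$ odd and $m\le 2s-1$, I get $\sum_r y_r b_r^{\,q}=0$ for all $q\in\{0,1,\dots,2s\}$. These are $k=2s+1$ equations in the $k$ unknowns $y_r$ with coefficient matrix the Vandermonde matrix on the \emph{distinct} nodes $b_1,\dots,b_k$, which is invertible; hence $y=0$. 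Therefore $(\xi_{j_1},\dots,\xi_{j_k})$ is uniform on $\mathbb{F}_2^{k}$, which is exactly $k$‑wise independence with uniform marginals.

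For the evaluation claim: given $\omega$ (that is, $1+st$ bits, with $t=O(\log n)$ and $s=O(k)$) and $j$, computing $\beta_j$ from $j$ is immediate from the chosen indexing; the powers $\beta_j,\beta_j^{3},\dots,\beta_j^{2s-1}$ are produced by $O(s)$ multiplications in $\mathbb{F}_{2^t}$ (each is the previous one times $\beta_j^{2}$), and then $\xi_j(\omega)$ costs $O(s)$ further field multiplications together with $O(s)$ trace evaluations and $\mathbb{F}_2$‑additions. With a standard representation of $\mathbb{F}_{2^t}$ this is $\poly(k,\log n)$, and with an appropriate representation (precomputed log/antilog tables) it is $O(k\log n)$ as stated; this last point is routine finite‑field arithmetic and I would simply cite \cite{ABI86} for it.

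The main obstacle, such as it is, is precisely the passage from the ``odd exponents $\le 2s-1$'' relations to the full Vandermonde system ``all exponents $\le 2s$'' via the Frobenius/squaring identity: this is what extracts $(2s+1)$‑wise—rather than merely $(s+1)$‑wise—independence from only $s$ field elements plus one extra bit, and it is the reason the statement requires $k$ odd and has dimension $m=1+\tfrac12(k-1)\log_2(n+1)$. Everything else (affine surjections have equal fibres, Vandermonde determinants on distinct nodes are nonzero, the trace form is nondegenerate, finite‑field arithmetic is cheap) is standard.
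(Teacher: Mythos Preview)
The paper does not give its own proof of this proposition; it is quoted verbatim as Proposition~6.5 of \cite{ABI86} and used as a black box. Your argument is correct and is precisely the BCH/dual-code construction of \cite{ABI86}: the sample point is $(c_0;c_1,\dots,c_s)\in\mathbb{F}_2\times(\mathbb{F}_{2^t})^s$, and $\xi_j$ is the $\mathbb{F}_2$-linear functional obtained by pairing with the column $(1,\beta_j,\beta_j^{3},\dots,\beta_j^{2s-1})$. Your use of Frobenius to promote the odd-exponent relations $\sum_r y_r b_r^{2i-1}=0$ (together with $\sum_r y_r=0$) to all exponents $0,\dots,2s$, and then invoking the Vandermonde determinant on the distinct $b_r$, is exactly the classical reason this code has minimum distance at least $k+1$, which is equivalent to $k$-wise independence of the columns. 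The only point worth flagging is the running-time claim: $O(s)$ multiplications in $\mathbb{F}_{2^t}$ with $t=\Theta(\log n)$ is not literally $O(k\log n)$ bit-operations under a na\"ive multiplication model, so your hedge (``with an appropriate representation \dots\ I would simply cite \cite{ABI86}'') is the right move---the paper itself simply imports the stated bound from \cite{ABI86}.
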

Note that even more efficient constructions are possible for random variables that are \textit{almost} $k$-wise independent \cite{AGHP92}, which might facilitate slight improvements to the running times of our algorithms.

We derandomize each of the $T$ repetitions of \alg{cbip} separately.  Each repetition uses $O(KL\log d)$ bits of randomness.\footnote{This involves a minor technical issue: the random walk chooses uniformly among $2d$ outcomes, and when $d$ is not a power of 2, we have to approximate the desired distribution.  This can be handled using standard techniques \cite{ABI86}.}
We claim that it suffices to use $k$-wise independent random bits for some $k = O(L\log d)$.  To see this, consider the analysis of \alg{cbip} in \cite{GR99}.  It is clear that when $G$ is bipartite, the algorithm accepts; the main task is to show that when $G$ is $\varepsilon$-far from bipartite, the algorithm rejects.  The proof establishes the contrapositive: assuming the algorithm accepts with probability at least 1/3, one can construct a bipartition of $G$ with few violating edges, thus showing that $G$ is $\varepsilon$-close to bipartite.

Lemma 4.5 of \cite{GR99} states sufficient conditions for the algorithm to find an odd cycle, and hence reject, with high probability.  The proof considers the random variable $X = \sum_{i<j} \eta_{ij}$, where $\eta_{ij}$ is a Boolean random variable that indicates whether walk $i$ collides with walk $j$ while having different parity.  The probability that $X=0$ is upper bounded using Chebyshev's inequality together with bounds on $\E[X]$ and $\Var[X]$.  In particular, $\E[X]$ and $\Var[X]$ are bounded in terms of quantities that only involve the behavior of a single random walk.  (Likewise, the sufficient conditions in Lemma 4.5 only involve a single random walk.)  Using Lemma 4.5, and the fact that the algorithm accepts, one can deduce properties of the graph $G$.  In the remainder of the proof, these properties are used to construct a bipartition of $G$ with few violating edges, as desired.

Note that $\E[X]$ and $\Var[X]$ are linear and quadratic in the $\eta_{ij}$, respectively, so they only depend on the behavior of sets of at most 4 random walks.  Thus they only depend on sets of at most $O(L\log d)$ random bits, so they are unchanged by substituting $k$-wise independent random bits for some $k = O(L\log d)$.  In particular, letting $\tilde{X}$ denote the derandomized version of $X$, we have $\E[\tilde{X}] = \E[X]$ and $\Var[\tilde{X}] = \Var[X]$.  Thus the probability that $\tilde{X}=0$ is upper bounded using the same argument as above.
It follows that we can substitute $k$-wise independent random variables, constructed using \prp{k-indep}, with $n = O(KL\log d)$ and $k = O(L\log d)$.  This reduces the number of random bits required by the algorithm to $O(k \log n) = O(\poly(\frac{\log N \log d}{\varepsilon}))$.

\subsection{A quantum algorithm}

We now give a quantum algorithm (\alg{qbip}) for testing bipartiteness.  The basic idea is to run several random walks starting from the same vertex $s$ and solve the element distinctness problem to find ``collisions'' between these walks.  
We use the following variant of Ambainis's quantum algorithm for element distinctness.  Let $X$ and $Y$ be finite sets.  Suppose we are given oracle access to a function $f\colon X \rightarrow Y$, and let $R \subseteq Y \times Y$ be a symmetric binary relation that we can compute in time $\poly(\log|Y|)$.  We define a \textit{collision} to be a distinct pair $x,x' \in X$ such that $(f(x),f(x')) \in R$.
The following result gives a quantum algorithm for finding collisions \cite{Amb07,MNRS07,Sze04}:
\begin{theorem}[Special case of Theorem 3 in \cite{MNRS07}]
\label{thm:find-collision}
There is a quantum algorithm (with oracle $f$) that finds a collision (with respect to $R$) with constant probability when a collision exists, always returns ``false'' when there are no collisions, and runs in time $O(|X|^{2/3} \cdot \poly(\log|Y|))$.
\end{theorem}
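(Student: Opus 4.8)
The plan is to obtain this as a direct specialization of the quantum walk search framework of Magniez, Nayak, Roland, and Santha (Theorem~3 of \cite{MNRS07}), instantiated with a random walk on the Johnson graph exactly as in Ambainis's element distinctness algorithm \cite{Amb07}. Fix a parameter $r$ to be chosen at the end. The walk runs on the Johnson graph $J(|X|,r)$, whose vertices are the $r$-element subsets $A\subseteq X$; one step replaces a uniformly random element of $A$ by a uniformly random element not in $A$. Call $A$ \emph{marked} if it contains a collision, i.e.\ distinct $x,x'\in A$ with $(f(x),f(x'))\in R$. Detecting whether a marked vertex exists will solve the decision version; returning an actual collision then follows by the standard reduction (binary search over $X$, or the finding variant of the framework), at the cost of only an extra logarithmic factor, which is absorbed into the $\poly(\log|Y|)$ term.

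Next I would record the quantities that the framework requires. The walk on $J(|X|,r)$ has spectral gap $\delta=\Omega(1/r)$ for $r\le |X|/2$. If a collision $(x,x')$ exists at all, every $r$-subset containing both $x$ and $x'$ is marked, so the fraction of marked vertices is at least $\binom{|X|-2}{r-2}/\binom{|X|}{r}=\frac{r(r-1)}{|X|(|X|-1)}$; hence we may take $\varepsilon=\Omega(r^2/|X|^2)$. The data structure stores a Johnson vertex $A$ together with the pairs $\set{(x,f(x)):x\in A}$ and a flag recording whether $A$ is marked: preparing the uniform superposition over such states costs $S=O(r)$ queries and $O(r\,\poly(\log|Y|))$ time; one walk step costs $U=O(1)$ queries and $\poly(\log|Y|)$ time (one new evaluation of $f$ together with updating the stored pairs and the flag); and checking whether the current $A$ is marked costs $C=0$ queries and $\poly(\log|Y|)$ time, since the flag is already maintained. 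Plugging these into the cost bound $O\!\left(S+\tfrac{1}{\sqrt{\varepsilon}}\bigl(\tfrac{1}{\sqrt{\delta}}\,U+C\bigr)\right)$ of \cite{MNRS07} gives query complexity $O\bigl(r+\tfrac{|X|}{r}\sqrt{r}\bigr)=O(r+|X|/\sqrt{r})$, which is minimized by $r=|X|^{2/3}$ at $O(|X|^{2/3})$, and the same substitution bounds the running time by $O(|X|^{2/3}\poly(\log|Y|))$. When there is no collision there are no marked vertices, so the procedure never reports one; the constant success probability is amplified in the usual way.

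The step I expect to be the main obstacle is the data structure behind $U$ and $C$: one needs to store a size-$r$ subset of $X$ with its $f$-images so that a single element can be swapped in or out in $\poly(\log|Y|)$ time \emph{while keeping current} the predicate ``the stored set contains a colliding pair'' without re-examining all $\binom{r}{2}$ pairs after each swap---doing the latter naively would push the time up to roughly $|X|^{4/3}$. For the equality relation this is Ambainis's hash table plus skip list \cite{Amb07}. For the relations arising in this paper---``two short random walks meet at a common vertex with opposite parities'' (bipartiteness) and ``two short walks share an endpoint'' (expansion)---each value $f(x)$ is supported on only $\poly(\log N)$ vertices, so one maintains a hash table keyed by the relevant labels (vertex, and parity where applicable) recording multiplicities, and a swap touches only $\poly(\log N)$ buckets. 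The real care is in making this bookkeeping reversible and history-independent, so that it genuinely defines a function of the Johnson-graph vertex as the framework demands; everything else is the routine Johnson-graph calculation above.
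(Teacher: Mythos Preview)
The paper does not give its own proof of this theorem; it is simply quoted as a black box, cited as a special case of Theorem~3 in \cite{MNRS07}. Your derivation via the Johnson-graph walk with $\delta=\Omega(1/r)$, $\varepsilon=\Omega(r^2/|X|^2)$, and $r=|X|^{2/3}$ is exactly the standard argument behind that citation, so there is nothing to compare against and your outline is correct.

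One small correction to your last paragraph: in Algorithms~\ref{alg:qbip} and~\ref{alg:qexp} the function $f$ returns a \emph{single} pair (vertex, parity) or a single vertex, not an entire trajectory; the domain $X$ already indexes the intermediate positions. Thus the relation $R$ in both applications is essentially equality on a key of size $O(\log|Y|)$, and the data structure you worry about is precisely Ambainis's hash table/skip list \cite{Amb07} with $O(1)$ buckets touched per swap, not $\poly(\log N)$ many. Your broader point stands, though: for a completely arbitrary relation $R$ computable in $\poly(\log|Y|)$ time, it is not clear how to maintain the ``contains a collision'' flag under single-element swaps without scanning the stored set, and the paper does not address this; the stated time bound should be read as holding for relations admitting such a data structure, which both applications here do.
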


In our application, each element of $X$ is a sequence of coin tosses; the function $f$ computes the endpoint of the corresponding walk in the graph, together with the number of steps along the way; and the relation $R$ tests whether two walks reach the same vertex, one after an even number of steps, the other after an odd number of steps.  
Since we search for collisions among $\tilde{O}(\sqrt{N})$ elements, we require $\tilde{O}(N^{1/3})$ evaluations of the function $f$.  Moreover, $f$ can be computed using only $\poly(\frac{\log N}{\varepsilon})$ queries, so the quantum algorithm uses $O(N^{1/3} \poly(\frac{\log N}{\varepsilon}))$ queries.

It is now clear why derandomizing the classical algorithm is useful: it allows a concise representation of the elements of $X$.  Rather than enumerating them explicitly, which would take $\tilde{O}(\sqrt{N})$ time, we can describe and manipulate them in time $\poly(\frac{\log N}{\varepsilon})$, so the quantum algorithm runs in time $O(N^{1/3} \poly(\frac{\log N}{\varepsilon}))$.

\begin{algorithm}[t]
\caption{Testing bipartiteness (quantum)}
\label{alg:qbip}
\begin{algorithmic}
\REQUIRE Oracle $f_G$ specifying a graph $G$ with $N$ vertices and max degree $d$;
accuracy parameter $\varepsilon$
\FOR{$\tau=1,\ldots,T$ for some $T = \Theta(1/\varepsilon)$}
  \STATE Pick a random vertex $s$
  \STATE Let $K = \poly(\frac{\log N}{\varepsilon}) \sqrt{N}$, $L = \poly(\frac{\log N}{\varepsilon})$, $n = KL$, and $k = \Theta(L)$
  \STATE Using \prp{k-indep}, construct $k$-wise independent random variables $b_{ij}$ taking values in $\set{0,1,\ldots,2d-1}$
  (for $i = 1,\ldots,K$ and $j = 1,\ldots,L$)
  \STATE Let $X = \set{1,\ldots,K} \times \set{1,\ldots,L}$ and $Y = \set{1,\ldots,N} \times \set{0,1}$
  \STATE Define $f\colon X \rightarrow Y$ as follows:
    Given $(i,j)$,
    run a random walk in $G$, starting at $s$, with random coin flips $(b_{i1},\ldots,b_{ij})$.
    Let $v$ be the endpoint of the walk, and
    let $q$ be the number of steps taken in the graph, not counting steps where the random walk chooses to stay at its current location.
    Return $(v, q \bmod 2)$.
  \STATE Define $R \subseteq Y \times Y$ such that $((v,c),(v',c')) \in R$ iff ($v=v'$ and $c \neq c'$)
  \IF{the algorithm from \thm{find-collision} finds a collision in $R$}
    \RETURN ``false''
  \ENDIF
\ENDFOR
\RETURN ``true''
\end{algorithmic}
\end{algorithm}

Note that the element distinctness algorithm requires that the function $f$ is computed unitarily as $U_f\colon \ket{x} \ket{z} \mapsto \ket{x} \ket{z \oplus f(x)}$.  We have access to the unitary operation $U_{f_G}\colon \ket{v,i} \ket{w} \mapsto \ket{v,i} \ket{w \oplus f_G(v,i)}$ provided by the oracle $f_G$.  Since computing $f$ only requires classical operations and queries to $f_G$, we can perform $U_f$ using reversible classical computation and queries to $U_{f_G}$.

\begin{theorem}
\label{thm:qbip}
\alg{qbip} always returns ``true'' when $G$ is bipartite, returns ``false'' with constant probability when $G$ is $\varepsilon$-far from bipartite, and runs in time $O(N^{1/3} \poly(\frac{\log N}{\varepsilon}))$.
\end{theorem}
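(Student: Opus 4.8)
The plan is to show that \alg{qbip} is just the derandomized version of \alg{cbip} with its collision search implemented by the element distinctness primitive, and then read off the three claims from \thm{find-collision}, the classical guarantee of \cite{GR99}, and \prp{k-indep}. Completeness is immediate: if $G$ is bipartite it has no odd cycle, so there is no collision in $R$. Indeed, if $(i,j)$ and $(i',j')$ satisfied $f(i,j)=(v,c)$ and $f(i',j')=(v,c')$ with $c\ne c'$, then concatenating the length-$j$ prefix of walk $i$ with the reversal of the length-$j'$ prefix of walk $i'$ would give a closed walk through $v$ whose number of non-lazy steps is odd, hence a closed walk of odd length in $G$, which forces an odd cycle --- a contradiction. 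Since $R$ has no collision, \thm{find-collision} guarantees the subroutine always returns ``false'', so \alg{qbip} outputs ``true'' with certainty.

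For soundness, fix $G$ that is $\varepsilon$-far from bipartite. The derandomization argument recalled above shows that replacing the true coin flips of \alg{cbip} by the $k$-wise independent bits $b_{ij}$ of \prp{k-indep} with $k=\Theta(L)$ leaves $\E[X]$ and $\Var[X]$ (hence the Chebyshev bound on $\Pr[X=0]$ used in Lemma~4.5 of \cite{GR99}) unchanged, because each of these quantities, and the sufficient condition of that lemma, depends on at most four random walks and therefore on at most $O(L\log d)=O(k)$ of the $b_{ij}$. Consequently the conclusion of Theorem~2 of \cite{GR99} still holds with these pseudorandom bits: a single repetition produces, with some constant probability over $s$ and the $b_{ij}$, two walk prefixes reaching a common vertex with opposite step-parity --- precisely a collision with respect to $R$ --- and by \thm{find-collision} the subroutine then finds a collision with constant probability. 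Thus one repetition returns ``false'' with probability at least some constant $p>0$, and the $T=\Theta(1/\varepsilon)$ independent repetitions amplify this to at least $2/3$.

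For the running time, in each of the $T=\Theta(1/\varepsilon)$ repetitions: sampling $s$ and instantiating the $k$-wise independent variables via \prp{k-indep} with $n=KL=\poly(\tfrac{\log N}{\varepsilon})\sqrt N$ and $k=\Theta(L)$ costs $\poly(\tfrac{\log N}{\varepsilon})$ (the minor adjustment when $2d$ is not a power of two being handled as in \cite{ABI86}); one application of $U_f$ is a reversible classical computation running a walk of length $L=\poly(\tfrac{\log N}{\varepsilon})$ with $O(L)$ queries to $U_{f_G}$, synthesized from $U_{f_G}$ by standard reversibilization; and $R$ is computable in $\poly(\log N)$ time. Applying \thm{find-collision} with $|X|=KL$ and $|Y|=2N$, the subroutine uses $O(|X|^{2/3}\poly(\log|Y|))=O(N^{1/3}\poly(\tfrac{\log N}{\varepsilon}))$ evaluations of $U_f$, hence that many queries and that much time; summing over the $T=\Theta(1/\varepsilon)$ repetitions preserves the bound $O(N^{1/3}\poly(\tfrac{\log N}{\varepsilon}))$.

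The main obstacle is the soundness step --- specifically, confirming that the \cite{GR99} analysis survives derandomization. The crucial observation, isolated above, is that the proof there invokes only $\E[X]$, $\Var[X]$, and the single-walk sufficient conditions of Lemma~4.5, each of which is a function of a constant number of length-$L$ walks and therefore of only $O(L\log d)=O(k)$ pseudorandom bits, so $k$-wise independence with $k=\Theta(L)$ reproduces every moment the argument uses. A secondary point to make explicit is that the pair $(f,R)$ in \alg{qbip} detects exactly the event used by the classical tester (common endpoint, opposite parity of the non-lazy step count), which is immediate from the definitions once the lazy-step bookkeeping is tracked correctly.
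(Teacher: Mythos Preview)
Your argument is correct and matches the paper's proof: completeness from the absence of odd cycles in bipartite graphs, soundness from the \cite{GR99} analysis surviving $k$-wise derandomization together with the element distinctness guarantee of \thm{find-collision}, and the time bound from $|X|=KL=\sqrt{N}\poly(\tfrac{\log N}{\varepsilon})$ and the $\poly(\tfrac{\log N}{\varepsilon})$ cost per evaluation of $f$. One small imprecision to fix: a \emph{single} repetition succeeds only with probability $\Omega(\varepsilon)$, not an absolute constant (in \cite{GR99} only an $\Omega(\varepsilon)$ fraction of start vertices $s$ are ``good''), which is exactly why $T=\Theta(1/\varepsilon)$ repetitions are used; your overall conclusion is unaffected.
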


\begin{proof}[Proof of \thm{qbip}]
When $G$ is bipartite, it has no odd cycles, so \alg{qbip} never finds a collision.  Thus the algorithm returns ``true.''

When $G$ is $\varepsilon$-far from bipartite, the analysis of \cite{GR99} implies that, with constant probability, one of the sets of random walks sampled by the algorithm contains a collision.  Thus the algorithm returns ``false.''  

For the bound on the running time, note that evaluating the $k$-wise independent random variables $b_{ij}(\omega)$ takes time $O(k \log n) = O(\poly(\frac{\log N}{\varepsilon}))$.  Also, each evaluation of the function $f$ takes time $\poly(\frac{\log N}{\varepsilon})$.  Since $X$ has size $O(\sqrt{N} \poly(\frac{\log N}{\varepsilon}))$, finding a collision takes time $O(N^{1/3} \poly(\frac{\log N}{\varepsilon}))$.
\end{proof}


\section{Quantum algorithm for testing expansion}
\label{app:exp}


\subsection{Derandomization}

We now turn to the problem of testing expansion.
We begin by recalling the classical algorithm for this problem, originally due to \cite{GR00}.  The basic idea is to test how rapidly a random walk from some starting vertex $s$ converges to the uniform distribution.  This can be done by running several random walks starting from $s$ and counting the number of collisions among their endpoints---the number of collisions is smallest when the distribution is uniform.  Here we consider the version of the algorithm that appears in \cite{NS07}.  (See \alg{cexp} for details.)  This algorithm makes $T = \Theta(1/\varepsilon)$ repetitions, and during each repetition, it runs $K = N^{1/2+\mu}$ random walks, each of length $L = (16d^2/\alpha^2) \log N$.  (Here $\varepsilon$, $\mu$, $d$, and $\alpha$ are parameters describing the problem of testing expansion.  They play only a minor role in the present discussion.)

\begin{algorithm}[t]
\caption{Testing expansion (classical)}
\label{alg:cexp}
\begin{algorithmic}
\REQUIRE Oracle $f_G$ specifying a graph $G$ with $N$ vertices and max degree $d$;
accuracy parameter $\varepsilon$; expansion parameter $\alpha$; running time parameter $\mu$
\FOR{$\tau=1,\ldots,T$ for some $T = \Theta(1/\varepsilon)$}
  \STATE Pick a random vertex $s$
  \FOR{$i = 1,\ldots,K$ where $K = N^{1/2+\mu}$}
    \STATE Starting from $s$, take a random walk of length $L = (16d^2/\alpha^2) \log N$
    (the random walk proceeds as follows:  at vertex $v$, for each adjacent vertex $u$, move to $u$ with probability $\frac{1}{2d}$; stay at $v$ with probability $1-\frac{\deg(v)}{2d}$)
    \STATE Let $w_i$ be the endpoint of the walk
  \ENDFOR
  \STATE Let $X$ be the number of pairwise collisions among the vertices $w_1,\ldots,w_K$
  \STATE Let $M = \frac{1}{2} N^{2\mu} + \frac{1}{128} N^{1.75\mu}$
  \IF{$X > M$}
    \RETURN ``false''
  \ENDIF
\ENDFOR
\RETURN ``true''
\end{algorithmic}
\end{algorithm}

This algorithm has the following performance guarantee \cite{NS07, Kale07, Czumaj07}:
\begin{theorem}[Theorem 2.1 in \cite{NS07}]
\label{thm:classical-expansion}
Assume $d \geq 3$, $0<\alpha<1$, and $0<\mu<\frac{1}{4}$.  \alg{cexp} runs in time $O(N^{1/2+\mu} \log N \cdot d^2/\varepsilon\alpha^2)$.  Furthermore, there exists a constant $c>0$ (which depends on $d$) such that, for any $0<\varepsilon<1$:
\begin{enumerate}
\item If $G$ is an $\alpha$-expander, then the algorithm returns ``true'' with probability at least $2/3$.
\item If $G$ is $\varepsilon$-far from any $(c\mu\alpha^2)$-expander of degree at most $d$, then the algorithm returns ``false'' with probability at least $2/3$.
\end{enumerate}
\end{theorem}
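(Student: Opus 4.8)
The plan is to follow the collision-based expansion test of Goldreich--Ron as sharpened by \cite{NS07} (see also \cite{Kale07,Czumaj07}), verifying the running time, completeness, and soundness separately. The \emph{running time} is immediate from the parameters: each of the $T=\Theta(1/\varepsilon)$ repetitions runs $K=N^{1/2+\mu}$ lazy walks of length $L=(16d^2/\alpha^2)\log N$, costing $O(KL)$ queries and time, and the collision count $X$ among the $K$ endpoints is obtained by sorting in time $O(K\log K)=O(KL)$; multiplying by $T$ gives $O(N^{1/2+\mu}\log N\cdot d^2/(\varepsilon\alpha^2))$.

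For \emph{completeness}, suppose $G$ is an $\alpha$-expander. The walk is reversible with uniform stationary distribution, and a Cheeger inequality for degree-$d$ graphs gives it spectral gap $\Omega(\alpha^2/d^2)$; since $L$ is a large enough multiple of $(d^2/\alpha^2)\log N$, the length-$L$ endpoint distribution $p_s$ from every start $s$ satisfies $\|p_s-u\|_2^2\le N^{-\Omega(1)}$, small enough that the collision probability $\|p_s\|_2^2 = 1/N+\|p_s-u\|_2^2$ gives $\E[X]=\binom{K}{2}\E_s\|p_s\|_2^2\le\tfrac12 N^{2\mu}+o(N^{1.75\mu})$, i.e.\ below the threshold $M=\tfrac12 N^{2\mu}+\tfrac1{128}N^{1.75\mu}$ by $\Omega(N^{1.75\mu})$. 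Using $\|p_s\|_\infty=O(1/N)$ in this case, a routine second-moment bound gives $\Var[X]=O(N^{2\mu})+O(N^{3\mu-1/2})=o(N^{3.5\mu})$, so by Chebyshev $\Pr[X>M]=o(1)$ per repetition, and a union bound over the $\Theta(1/\varepsilon)$ repetitions keeps the error below $1/3$.

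For \emph{soundness} I would argue the contrapositive via the structural lemma at the heart of \cite{NS07,Kale07,Czumaj07}: if $G$ is $\varepsilon$-far from every $(c\mu\alpha^2)$-expander of degree $\le d$, then there is a set of $\Omega(\varepsilon N)$ ``bad'' start vertices $s$ from which the length-$L$ walk stays, with probability close to $1$, inside some set of at most $N/2$ vertices; by Cauchy--Schwarz this forces $\|p_s\|_2^2\ge(1+\Omega(1))/N$, so $\E[X\mid s\text{ bad}]\ge M+\Omega(N^{2\mu})$. Since the $T=\Theta(1/\varepsilon)$ repetitions use independent uniform starts, $\Omega(\varepsilon T)=\Omega(1)$ of them start at bad vertices in expectation; a Paley--Zygmund / second-moment argument (the relative second moment $\E[X^2]/\E[X]^2$ is $1+o(1)$ since $\E[X]\to\infty$ and $\|p_s\|_\infty/\|p_s\|_2^2\le\sqrt N=o(K)$) shows each such repetition rejects with probability bounded away from $0$, so choosing the constant in $T$ large enough makes the overall rejection probability at least $2/3$.

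The \textbf{main obstacle} is the structural lemma used for soundness: quantitatively converting ``$\varepsilon$-far from a $(c\mu\alpha^2)$-expander of degree $\le d$'' into ``$\Omega(\varepsilon N)$ start vertices from which the length-$L$ walk fails to mix by a margin that clears the threshold $M$,'' with the constant $c$ chosen so that the combined loss from vertex expansion to conductance (a factor $d$), from conductance to spectral gap (Cheeger, a factor $\alpha$), and from mixing to detectability at the collision scale $N^{2\mu}$ (the factor $\mu$) balances correctly. This is exactly the combinatorial bookkeeping of \cite{NS07,Kale07,Czumaj07} on how small-boundary sets sit inside the graph, together with controlling the second moment of the collision count when the endpoint distribution is concentrated.
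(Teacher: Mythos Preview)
The paper does not prove this theorem; it is quoted verbatim as Theorem~2.1 of \cite{NS07} (with related proofs in \cite{Kale07,Czumaj07,GR00}) and used as a black box. So there is no ``paper's own proof'' to compare against.

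That said, your sketch is a faithful outline of how those references argue: the running-time calculation is correct; completeness goes via Cheeger plus the $\ell_2$ mixing bound to show $\|p_s\|_2^2$ is close to $1/N$ for every start, then Chebyshev on the collision count; and soundness is exactly the structural lemma you identify---``far from weak expander'' implies $\Omega(\varepsilon N)$ start vertices trapped in a set of size at most $N/2$, forcing $\|p_s\|_2^2 \ge (1+\Omega(1))/N$, with a second-moment argument for concentration. Your ``main obstacle'' paragraph correctly locates the nontrivial content in the chain of losses (vertex expansion $\to$ conductance $\to$ spectral gap $\to$ collision surplus at scale $N^{2\mu}$) that determines the constant $c$ and the dependence $c\mu\alpha^2$. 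One small point: for completeness you need the per-repetition failure probability to be $o(\varepsilon)$ rather than merely $o(1)$ to survive the union bound over $T=\Theta(1/\varepsilon)$ rounds; this holds because the Chebyshev bound is actually $N^{-\Omega(\mu)}$, polynomially small in $N$, while $\varepsilon$ is a fixed constant in $(0,1)$.
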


We partially derandomize \alg{cexp} using $k$-wise independent random variables, with similar motivation and techniques as for testing bipartiteness.  Note that the algorithm originally requires $KL\log d = O(N^{1/2+\mu} (d^2/\alpha^2) \log N \log d)$ random bits for each of the $T$ repetitions.  As for bipartiteness, we derandomize each of the repetitions independently.  

To explain how the derandomization works, we recall the proof of \thm{classical-expansion} in \cite{NS07}.  Letting $\eta_{ij}$ indicate whether walk $i$ collides with walk $j$, one must show that the random variable $X = \sum_{i<j} \eta_{ij}$ is concentrated around its expectation.  This is done using Chebyshev's inequality together with bounds on $\E[X]$ and $\Var[X]$.  In particular, it is possible to bound $\E[X]$ and $\Var[X]$ in terms of quantities that only depend on the behavior of a single random walk.\footnote{This step occurs in Lemma 3.4 of \cite{NS07}, Lemma 3.1 of \cite{Kale07}, Lemma 4.1 of \cite{Czumaj07}, and Lemma 1 of \cite{GR00}.}

Note that $\E[X]$ and $\Var[X]$ are linear and quadratic in the $\eta_{ij}$, respectively, so they only depend on correlations among up to 4 random walks.  These correlations involve subsets of at most $4L\log d$ random bits.  Thus, we can substitute $k$-wise independent random bits, where $k = 4L\log d$.  Letting $\tilde{X}$ denote the number of collisions in the derandomized algorithm, we have $\E[\tilde{X}] = \E[X]$ and $\Var[\tilde{X}] = \Var[X]$.  Thus the proof goes through just as before:  the bounds on $\E[X]$ and $\Var[X]$ also imply that $\tilde{X}$ is concentrated around its expectation.
Finally, note that the derandomized algorithm requires only
$O\big(k \log(KL\log d)\big) = O\big((d^2/\alpha^2) \log^2 N \log d\big)$
bits of randomness for each of the $T$ repetitions.  We have reduced the number of random bits from $\tilde{O}(N^{1/2+\mu})$ to $O(\log^2 N)$.

\subsection{A quantum algorithm}

We now describe a quantum algorithm for testing expansion.  The basic idea is similar to that for bipartiteness, with one additional detail:  we run several random walks from the same starting vertex, then use a quantum algorithm to \textit{count} the number of collisions among the endpoints of the walks.  More precisely, we determine whether the number of collisions is greater or less than $M$, for small values of $M$.  We do this by using the algorithm of \thm{find-collision} to explicitly find up to $M$ collisions, one at a time; this takes time polynomial in $M$.  See \alg{countcollisions}, whose performance is characterized as follows.


\begin{algorithm}[t]
\caption{Counting collisions (quantum)}
\label{alg:countcollisions}
\begin{algorithmic}
\REQUIRE A set $X$, an oracle $f\colon X \rightarrow Y$, a relation $R \subseteq Y \times Y$, and a number $M$
\STATE Initialize $S = \varnothing$
\FOR{$i=1,\ldots,M$}
  \FOR{$j=1,\ldots,t$ for some $t = \Theta(\log M)$}
    \STATE Run the algorithm of \thm{find-collision} to find some distinct $x,x' \in X$ such that $(f(x),f(x')) \in R$ and $(x,x') \notin S$.
    \IF{the algorithm finds a collision $(x,x')$}
      \STATE set $S = S \cup \set{(x,x'),  (x',x)}$
      \STATE break out of the inner \textbf{for} loop
    \ENDIF
  \ENDFOR
  \IF{the algorithm did not find a collision on any of the $t$ tries}
    \RETURN ``false''
  \ENDIF
\ENDFOR
\RETURN ``true''
\end{algorithmic}
\end{algorithm}

\begin{algorithm}[t]
\caption{Testing expansion (quantum)}
\label{alg:qexp}
\begin{algorithmic}
\REQUIRE Oracle $f_G$ specifying a graph $G$ with $N$ vertices and max degree $d$;
accuracy parameter $\varepsilon$; expansion parameter $\alpha$; running time parameter $\mu$
\FOR{$\tau=1,\ldots,T$ for some $T = \Theta(1/\varepsilon)$}
  \STATE Pick a random vertex $s$
  \STATE Let $K = N^{1/2+\mu}$,
  $L = (16d^2/\alpha^2) \log N$,
  $n = KL$, and
  $k = \Theta(L)$
  \STATE Using \prp{k-indep}, construct probability space $\Omega$ and $k$-wise independent random variables $b_{ij}$ taking values in $\set{0,1,\ldots,2d-1}$ (for $i = 1,\ldots,K$ and $j = 1,\ldots,L$)
  \STATE Choose $\omega \in \Omega$ uniformly at random
  \STATE Let $X = \set{1,\ldots,K}$ and $Y = \set{1,\ldots,N}$
  \STATE Define $f\colon X \rightarrow Y$ as follows:
    Given $i$, return the endpoint of the random walk in $G$
    that starts at $s$ and uses random coin flips
    $(b_{i1}(\omega),\ldots,b_{iL}(\omega))$
  \STATE Define $R \subseteq Y \times Y$ such that $(v,v') \in R$ iff $v=v'$
  \STATE Let $M = \frac{1}{2} N^{2\mu} + \frac{1}{128} N^{(1.75)\mu}$
  \FOR {$\sigma=1,\ldots,t$ for some $t = \Theta(1)$}
    \STATE Run \alg{countcollisions} to test whether there are $M+1$ or more collisions
    \IF{\alg{countcollisions} returns ``true''}
      \RETURN ``false''
    \ENDIF
  \ENDFOR
\ENDFOR
\RETURN ``true''
\end{algorithmic}
\end{algorithm}

\begin{lemma}
\label{lem:count-collisions}
\alg{countcollisions} returns ``true'' with constant probability if there are $M$ or more collisions, always returns ``false'' if there are strictly fewer than $M$ collisions, and runs in time $O(M \log M \cdot |X|^{2/3} \cdot \poly(\log|Y|))$.
\end{lemma}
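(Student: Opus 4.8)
The plan is to analyze \alg{countcollisions} by tracking the set $S$ of collisions already found, arguing that each successful iteration of the outer loop adds a genuinely new collision pair, and bounding the failure probability of each inner loop via amplification.

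\textbf{Correctness when there are fewer than $M$ collisions.} First I would observe that $S$ only ever contains pairs $(x,x')$ (and their reversals) that are genuine collisions, i.e. $x\neq x'$ and $(f(x),f(x'))\in R$. Since $R$ is symmetric, the collisions come in unordered pairs, and we add both orderings to $S$ at once; so after $i$ successful outer iterations, $S$ contains exactly $i$ distinct unordered collision pairs. By \thm{find-collision}, the subroutine (searching for a collision $(x,x')$ with $(x,x')\notin S$, which we implement by augmenting the relation $R$ with the side condition $(x,x')\notin S$ — still computable in $\poly(\log|Y|)$ time since $|S|\le 2M$) \emph{always} returns ``false'' when no such collision exists. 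Hence if the total number of collisions is some $m<M$, then once $|S|$ accounts for all $m$ of them, every subsequent call returns ``false'' on all $t$ tries, and \alg{countcollisions} returns ``false'' with certainty. (If fewer than $m$ have been found yet, the algorithm may continue, but it can never reach the $M$-th successful outer iteration, so it cannot return ``true''.)

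\textbf{Correctness when there are $M$ or more collisions.} Here, as long as fewer than $M$ collisions have been recorded in $S$, there remains at least one undiscovered collision, so by \thm{find-collision} each individual call to the subroutine finds one with some constant probability $p>0$. The inner loop repeats the subroutine $t=\Theta(\log M)$ times, so the probability that a given outer iteration fails to find a new collision is at most $(1-p)^t \le 1/(10M)$ for a suitable constant in $t=\Theta(\log M)$. By a union bound over the $M$ outer iterations, with probability at least $9/10$ every outer iteration succeeds, and the algorithm returns ``true''; this is the claimed constant success probability.

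\textbf{Running time.} The outer loop runs at most $M$ times and the inner loop at most $t=\Theta(\log M)$ times, for $O(M\log M)$ invocations of the subroutine of \thm{find-collision}; each such invocation runs in time $O(|X|^{2/3}\cdot\poly(\log|Y|))$ (the side condition $(x,x')\notin S$ affects only the $\poly(\log|Y|)$ factor, since $|S|\le 2M$ and membership can be tested in $\poly(\log M,\log|X|)$ time, absorbed into the polylog). Multiplying gives the stated bound $O(M\log M\cdot|X|^{2/3}\cdot\poly(\log|Y|))$. I expect the only real subtlety to be the bookkeeping that justifies treating ``find a collision not already in $S$'' as an instance covered by \thm{find-collision} with a modified relation, and confirming that maintaining and querying $S$ does not dominate the running time; both are routine once the reduction is stated carefully.
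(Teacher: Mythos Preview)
Your proposal is correct and follows essentially the same approach as the paper: amplify each outer iteration with $t=\Theta(\log M)$ inner repetitions so that its failure probability is $O(1/M)$, then take a union bound over the $M$ iterations; the paper treats the ``fewer than $M$ collisions'' case and the running time as immediate, which you spell out in more detail. Your remark about recasting ``find a collision not already in $S$'' as an instance of \thm{find-collision} is a point the paper leaves implicit---note that since $R$ lives on $Y\times Y$ while the exclusion $(x,x')\notin S$ is on $X\times X$, the clean way to do this is to replace $f$ by $x\mapsto (x,f(x))$ and put the side condition into the relation on the enlarged codomain, which only changes the $\poly(\log|Y|)$ factor.
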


\begin{proof}[Proof of \lem{count-collisions}]
  Suppose there are $M$ or more collisions.  Then in each iteration $(i=1,\ldots,M)$, there are collisions to be found.  Consider what happens in iteration $i$.  Say that the algorithm from \thm{find-collision} returns ``false'' with probability at most $p$ (some constant).  We run that algorithm $t = \log_{1/p}(3M)$ times.  The probability that it returns ``false'' on every attempt is at most $p^t = \frac{1}{3M}$, so the probability that we return ``false'' during iteration $i$ is at most $\frac{1}{3M}$, and by the union bound, the probability that we return ``false'' is at most $1/3$.

The other claims are easy to see.
\end{proof}

Our quantum algorithm for testing expansion is now straightforward; see \alg{qexp}.  We prove the following:

\begin{theorem}
\label{thm:qexp}
\alg{qexp} runs in time $O(N^{1/3+3\mu} \poly(\log N) \cdot (d^2/\varepsilon\alpha^2) \log(d/\alpha))$.  Furthermore, there exists a constant $c>0$ (which depends on $d$) such that, for any $0<\varepsilon<1$:
\begin{enumerate}
\item If $G$ is an $\alpha$-expander, then the algorithm returns ``true'' with probability at least $2/3$.
\item If $G$ is $\varepsilon$-far from any $(c\mu\alpha^2)$-expander of degree at most $d$, then the algorithm returns ``false'' with probability at least $0.6$.
\end{enumerate}
\end{theorem}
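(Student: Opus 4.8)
The plan is to follow the structure of the proof of \thm{qbip}, with two substitutions: the collision-detection primitive \thm{find-collision} is replaced by the collision-\emph{counting} primitive \lem{count-collisions}, and the classical performance guarantee is replaced by the derandomized form of \thm{classical-expansion}.

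First I would bound the running time. In \alg{qexp} the search space has size $|X|=K=N^{1/2+\mu}$ and $|Y|=N$, and the threshold passed to \alg{countcollisions} is $M+1=\Theta(N^{2\mu})$, so by \lem{count-collisions} one invocation of \alg{countcollisions} costs $O(M\log M\cdot|X|^{2/3}\cdot\poly(\log|Y|))=\tilde O\bigl(N^{2\mu}\cdot N^{(1/2+\mu)\cdot 2/3}\bigr)=\tilde O(N^{1/3+8\mu/3})$ evaluations of $f$, which is $\tilde O(N^{1/3+3\mu})$ since $8/3<3$; this is where the small exponent of $M$ is used. Each evaluation of $f$ is a length-$L$ walk with $L=(16d^2/\alpha^2)\log N$ using coins drawn from the $k$-wise independent family of \prp{k-indep} with $k=\Theta(L)$ and $n=KL$, so folding in the cost of the coin generator, the test of $R$, the $T=\Theta(1/\varepsilon)$ outer repetitions and the $t=\Theta(1)$ amplification rounds reproduces the stated bound $O(N^{1/3+3\mu}\poly(\log N)\cdot(d^2/\varepsilon\alpha^2)\log(d/\alpha))$, exactly as in the proof of \thm{qbip}. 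As there, $U_f$ is realized by reversible classical computation together with queries to $U_{f_G}$.

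For correctness I would first invoke the derandomization argument preceding the theorem: for $k=\Theta(L)$, the mean and variance of $\tilde X_\tau$ — the number of endpoint-collisions in repetition $\tau$ under $k$-wise independent coins — agree with their fully-random values, so the Chebyshev analysis behind \thm{classical-expansion} still applies, and the derandomized algorithm satisfies $\Pr[\tilde X_\tau\le M\text{ for all }\tau]\ge 2/3$ when $G$ is an $\alpha$-expander, and $\Pr[\exists\tau:\tilde X_\tau> M]\ge 2/3$ when $G$ is $\varepsilon$-far from every $(c\mu\alpha^2)$-expander of degree $\le d$. In the expander case, on the event that $\tilde X_\tau\le M$ for every $\tau$, \lem{count-collisions} forces every call to \alg{countcollisions} to return ``false,'' so \alg{qexp} outputs ``true''; hence it accepts with probability $\ge 2/3$. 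In the far case, condition on a repetition $\tau^*$ with $\tilde X_{\tau^*}\ge M+1$; by \lem{count-collisions} each of the $t$ runs of \alg{countcollisions} in that repetition outputs ``true'' with some constant probability $p_0>0$, so with probability $\ge 1-(1-p_0)^t$ at least one of them does, and this is $\ge 0.95$ once $t$ is a sufficiently large constant. Since the internal randomness of \alg{countcollisions} is independent of the event $\{\exists\tau:\tilde X_\tau> M\}$, \alg{qexp} rejects with probability at least $\tfrac23\cdot 0.95>0.6$.

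The main obstacle, I expect, is not a single deep step but closing the error analysis cleanly: one must compose the one-sided error of \alg{countcollisions} with the outer $\tau$- and $\sigma$-loops so that both the completeness bound $2/3$ and the soundness bound $0.6$ emerge, while separately verifying — for the running time — that the brute-force count is $\Theta(N^{2\mu})$ and hence does not overwhelm the $\tilde O(N^{1/3})$ element-distinctness search, which is exactly where the hypothesis $0<\mu<\tfrac14$ enters. The remainder is bookkeeping that mirrors the proof of \thm{qbip}.
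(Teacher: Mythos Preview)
Your proposal is correct and follows essentially the same approach as the paper: bound the running time via \lem{count-collisions} with $|X|=N^{1/2+\mu}$ and threshold $\Theta(N^{2\mu})$, then derive correctness by combining the one-sided error of \alg{countcollisions} with the derandomized classical guarantee, obtaining the $2/3$ bound in the expander case from the ``always false'' clause and the $0.6$ bound in the far case from $\tfrac{2}{3}\cdot\bigl(1-(1-p_0)^t\bigr)$. One small quibble: the hypothesis $0<\mu<\tfrac14$ is inherited from \thm{classical-expansion} and is needed for the \emph{correctness} guarantee, not for the running-time bound $N^{1/3+3\mu}$ itself; your remark that it ``is exactly where'' the running-time analysis uses it is misplaced, but this does not affect the argument.
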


\begin{proof}[Proof of \thm{qexp}]
  Suppose $G$ is an $\alpha$-expander.  Then with probability at least $2/3$, the number of collisions in each of the $T$ repetitions is at most $M$.  When this happens, the collision-counting algorithm always returns ``false,'' so we return ``true.''

Suppose $G$ is $\varepsilon$-far from any $(c\mu\alpha^2)$-expander of degree at most $d$.  Then with probability at most $2/3$, the number of collisions is at most $M+1$ in at least one of the $T$ repetitions.  When this happens, the collision-counting algorithm returns ``true'' with probability at least some constant $p$.  We run the collision-counting algorithm $t$ times, where $t = \log_{1/(1-p)} 10$.  The probability that this returns ``false'' every time is at most $(1-p)^t = 1/10$.  Thus, with probability at least $9/10$, the collision-counting algorithm returns ``true'' at least once, so we return ``false.''

The bound on the running time is straightforward.  In particular, implementing the $k$-wise independent random variables $b_{ij}$ takes time and space $O(k \log n) = O((d^2/\alpha^2) \poly(\log N) \log(d/\alpha))$.  Also, note that the only way we query the graph oracle $f_G$ is to evaluate the function $f$.  Evaluating $f$ requires $L = O((d^2/\alpha^2) \log N)$ queries to $f_G$, and the collision-counting algorithm requires $O(N^{1/3+3\mu} \poly(\log N))$ evaluations of $f$.
\end{proof}


\section{Quantum lower bound for testing expansion}
\label{app:lb}

\subsection{Bounding the failure probability and its impact}
\label{sec:fail}

\begin{lemma}
Let $\frac{M}{N} = 1+\Omega(\frac{1}{N^c})$ and $l\leq N^{1/4}$. 
The probability that the process generating $P_{M, l}$ fails is
at most $e^{-\Omega(N^{0.75-2c})}$.
\end{lemma}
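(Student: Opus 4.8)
The plan is to recognise the failure event as an overflow in a balls-into-bins experiment and then apply a multiplicative Chernoff bound followed by a union bound over the $l$ parts. First I would set up the reduction: at each of the $N$ steps the sampling procedure picks one of the $l$ sets $V_1,\dots,V_l$ uniformly at random and independently of the past, and the procedure fails precisely when, for some $j$, the set $V_j$ gets picked more than $M/l$ times (at that point $V_j$ is exhausted and no unchosen vertex remains; note that $M/l\in\mathbb{Z}$ since $l\mid M$). Writing $B_j$ for the number of steps on which $V_j$ is picked, we thus have $(B_1,\dots,B_l)\sim\mathrm{Multinomial}(N;\tfrac{1}{l},\dots,\tfrac{1}{l})$ and
\[
\Pr[\text{failure}] \;=\; \Pr\Bigl[\,\max_{1\le j\le l} B_j > M/l\,\Bigr].
\]

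Next I would estimate a single $B_j$, which is $\mathrm{Binomial}(N,1/l)$ with mean $\mu := N/l$, while the threshold is $M/l = (1+\delta)\mu$ with $\delta := M/N-1 = \Omega(N^{-c})$ (we may take $\delta = o(1)$, since a larger $\delta$ only shrinks the tail). A multiplicative Chernoff bound gives
\[
\Pr[B_j > (1+\delta)\mu] \;\le\; \exp\!\bigl(-\Omega(\delta^2\mu)\bigr) \;=\; \exp\!\bigl(-\Omega(N^{1-2c}/l)\bigr) \;\le\; \exp\!\bigl(-\Omega(N^{0.75-2c})\bigr),
\]
where the last inequality uses $l\le N^{1/4}$. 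A union bound over the at most $N^{1/4}$ parts then yields $\Pr[\text{failure}]\le N^{1/4}\exp(-\Omega(N^{0.75-2c})) = \exp(-\Omega(N^{0.75-2c}))$, the $\log(N^{1/4})$ loss in the exponent being swamped by the polynomial term. (For the choice $M = (1+\Theta(N^{-0.1}))N$ used in the main proof, $c = 0.1$ and this bound is $e^{-\Omega(N^{0.55})}$, matching the figure quoted there.)

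I do not expect a genuine obstacle here; once the reduction is in place the rest is a routine concentration estimate. The one step deserving care is the reduction itself --- checking that the sequential, conditioned vertex-sampling does not disturb the multinomial law of the counts $(B_1,\dots,B_l)$ (it does not, because the \emph{choice of set} at each step is unconditionally uniform and independent of the past) and that ``overflow'' coincides exactly with the event $\max_j B_j > M/l$.
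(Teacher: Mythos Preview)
Your proposal is correct and follows essentially the same argument as the paper: identify the failure event as some bin count $B_j$ exceeding $M/l$, apply a multiplicative Chernoff bound to a single $B_j\sim\mathrm{Bin}(N,1/l)$ with mean $N/l\ge N^{0.75}$ and slack $\delta=\Omega(N^{-c})$, and finish with a union bound over the $l\le N^{1/4}$ parts. Your added remark that the choice of set at each step is unconditionally uniform (so the counts really are multinomial) is a point the paper leaves implicit.
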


\begin{proof}
Let $X_i$ be the number of vertices chosen from $V_i$. 
$X_i$ is a random variable with expectation $\E[X_i]=N/l$. 
Since $l\leq N^{1/4}$, we have $\E[X_i] \geq N^{0.75}$.
We have to bound 
the probability that $X_i> (1+\epsilon) N/l$ where $\epsilon = \Omega(1/N^c)$.
By standard Chernoff bounds,
\[ \Pr\left[ X_i> (1+\epsilon) \frac{N}{l} \right] \leq e^{- \epsilon^2 \E[X_i]/3} \leq
e^{- N^{0.75-2c}/3} .\]
By the union bound, the probability of the process failing is at most
$l\leq N^{1/4}$ times the probability above.
\end{proof}

Thus the expectations $\E[P]$ for monomials $P$ 
calculated in \sec{apx} are within
a factor $1+e^{-\Omega(N^{0.75-2c})}$ of the correct ones.
To estimate the overall error in the expectation $\E[P_A]$,
we also need bounds on the coefficients of various monomials $P$
in the polynomial $P_A$. (A small error times a large coefficient might
mean a larger error.)  These can be obtained as follows.

\begin{lemma}
Let $P_A$ be the polynomial describing the acceptance probability of
a quantum algorithm. Let $P$ be a monomial of degree $k$. 
Then the absolute value of the coefficient of $P$ in the polynomial $P_A$
is at most $2^k$.
\end{lemma}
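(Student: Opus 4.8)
The plan is to combine two standard facts underlying the polynomial method \cite{BBCMW01} with the finite-difference formula for the leading coefficient of a multilinear polynomial. First I would record the two properties of $P_A$ that are needed. Since the variables $x_{u,v,j}$ are Boolean, $P_A$ may be taken to be multilinear: replacing each $x_{u,v,j}^2$ by $x_{u,v,j}$ leaves the value unchanged at every Boolean point and does not raise the degree, so the bound $\deg P_A \le 2T$ survives. Moreover, every $0/1$ assignment to the $x_{u,v,j}$ determines an oracle $f_G$ --- fixing once and for all a convention for assignments that do not describe a union of $c$ perfect matchings, chosen consistently with the expansion of the query operator that yields the degree-$2T$ bound --- and with respect to that oracle $P_A$ is an actual acceptance probability; hence $P_A$ takes values in $[0,1]$ at every Boolean assignment.

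Next I would fix the monomial $P = \prod_{(u,v,j)\in S} x_{u,v,j}$ with $|S| = k$ and let $c_S$ denote its coefficient in $P_A$. For $T\subseteq S$, let $\mathbf{1}_T$ be the Boolean assignment with $x_{u,v,j}=1$ for $(u,v,j)\in T$ and $x_{u,v,j}=0$ otherwise. Using multilinearity, a short inclusion--exclusion computation --- expand $P_A(\mathbf{1}_T) = \sum_{S'\subseteq T} c_{S'}$, sum over $T\subseteq S$, and use $\sum_{T:\,S'\subseteq T\subseteq S}(-1)^{|S|-|T|} = [S'=S]$ --- yields
\[
 c_S \;=\; \sum_{T\subseteq S}(-1)^{k-|T|}\,P_A(\mathbf{1}_T).
\]
This sum has $2^k$ terms, each of absolute value at most $1$ by the first paragraph, so $|c_S|\le 2^k$ by the triangle inequality, which is exactly the claim.

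The argument is essentially routine; the only step I expect to need care is justifying that $P_A\in[0,1]$ at \emph{every} Boolean assignment, including assignments not arising from the distribution $P_{M,l}$ --- hence the convention above for extending $f_G$ to arbitrary $0/1$ inputs. A clean alternative that sidesteps this point is to note that the degree-$\le T$ complex amplitude polynomials $\alpha_z$ produced by the polynomial method satisfy $\sum_z|\alpha_z|^2 = 1$ identically, so that $P_A = \sum_{z\ \mathrm{accepting}}|\alpha_z|^2$ lies in $[0,1]$ with nothing further to check.
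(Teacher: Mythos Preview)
Your proof is correct and follows essentially the same approach as the paper: both use the inclusion--exclusion identity $c_S = \sum_{T\subseteq S}(-1)^{|S|-|T|}P_A(\mathbf{1}_T)$ and bound each of the $2^k$ terms by $1$ using $0\le P_A\le 1$ on Boolean inputs. Your write-up is more careful than the paper's in spelling out multilinearity and in justifying $P_A\in[0,1]$ at \emph{all} Boolean points (via the amplitude-polynomial argument), which the paper simply asserts.
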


\begin{proof}
Let $S$ be the set of variables that appear in $P_A$. For a set of variables $S'$,
let $x(S')$ be the input in which $x_i=1$ for all $i\in S'$ and $x_i=0$ for all
other $i$. By inclusion-exclusion, the coefficient of $P$ is equal to
\begin{equation} 
\label{eq:ies}
\sum_{S'\subseteq S} (-1)^{|S|-|S'|} P_A(x(S')) .
\end{equation}
Since $P_A$ describes the acceptance probability of a quantum algorithm,
$0\leq P_A(x(S'))\leq 1$.
Since the sum \eq{ies} contains $2^k$ terms, its magnitude is at most $2^k$.
\end{proof}

Since our input is described by $O(N^2)$ variables, $P_A$ contains at most
${O(N^2) \choose k} \leq N^{2k}$ monomials $P$ of degree $k$. Therefore, the overall
error introduced by the fact that the process generating the probability distribution 
$P_{M, l}$ may fail is at most
\begin{equation}
\label{eq:last} 
N^{2k} 2^k e^{-\Omega(N^{0.75-2c})} .
\end{equation}
Since $k < N^{1/4}$, we have $N^{2k} 2^k = e^{O(N^{1/2} \log N)}$.
If $c<0.125$, then $N^{1/2} \log N = o(N^{0.75-2c})$ and
the overall error \eq{last} is of the order
$e^{-\Omega(N^{0.75-2c})}$.

\subsection{Expansion properties of subgraphs of unions of random matchings}
\label{sec:expand}

Here we prove that graphs drawn from the distribution $P_{M,1}$ (restricted to the $l=1$ case) are expanders with high probability.
As described in \sec{lboverview}, we consider a random graph $G$ on $N$ vertices obtained in two steps:
\begin{enumerate}
\item
Let $G'=(V', E')$ be a union of $c$ perfect matchings on $M$ vertices;
\item
Let $G=(V, E)$ be the induced subgraph on a random subset of $N$ vertices of $G'$.
\end{enumerate}

We follow the proof that a union of 3 random matchings is an expander with high probability,
as described in \cite{Goldreich}.

\begin{lemma}
\label{lem:expand}
Assume that $M \leq (1+ \frac{1}{N^a})N$ for some 
$a>0$ and that $a, c$ satisfy $c\geq 5$ and $ac>2$. Then there exists $\alpha>0$ such that
a random graph $G$ generated according to the probability distribution $P_{M,1}$
is an $\alpha$-expander with probability $1-o(1)$. 
\end{lemma}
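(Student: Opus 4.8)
The plan is to adapt the classical union-bound proof that a union of $c$ random perfect matchings on $M$ vertices is an expander (\cite{P73,MR,Goldreich}) to the induced subgraph $G$, showing that deleting the $M-N$ vertices of $V'\setminus V$ perturbs the argument only in a way that is absorbed by the hypotheses $c\geq5$ and $ac>2$. Since the $c$ matchings and the random $N$-subset $V$ are chosen independently, I would fix $V$ (so $|V|=N$ and $|V'\setminus V|=M-N\leq N^{1-a}$) and bound, over the choice of the matchings $\pi_1,\dots,\pi_c$, the probability that $G$ fails to be an $\alpha$-expander; the resulting bound will depend on $V$ only through $N$ and $M-N$, so it may be averaged over $V$ at the end. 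Since the edges of $G$ are exactly the edges of $G'$ with both endpoints in $V$, for $U\subseteq V$ we have $\partial_G(U)=(N_{G'}(U)\setminus U)\cap V$, so $G$ fails to be an $\alpha$-expander precisely when there is a set $U\subseteq V$ with $|U|=k\leq N/2$ and a set $W\subseteq V$ with $|W|<\alpha k$ such that $N_{G'}(U)\subseteq U\cup W\cup(V'\setminus V)$. I would bound the probability of this event by a union bound over $k$, $U$, and $W$, with $\alpha>0$ to be chosen small depending on $a$ and $c$.

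For fixed $U$ and $W$ as above, write $S=U\cup W\cup(V'\setminus V)$, so $|S|\leq(1+\alpha)k+(M-N)$, and let $B_{U,W}$ be the event $N_{G'}(U)\subseteq S$. Because the matchings are independent, $\Pr[B_{U,W}]=\prod_{j=1}^c\Pr[N_{\pi_j}(U)\subseteq S]$. For a single matching $\pi$, reveal the partners of the vertices of $U$ one at a time: each pair incident to $U$ stays inside $S$ with conditional probability at most $(|S|-1)/(M-2k)$, and $U$ is incident to at least $\lceil k/2\rceil$ pairs, so
\[
\Pr[N_{\pi}(U)\subseteq S]\ \leq\ \left(\frac{(1+\alpha)k+(M-N)}{M-2k}\right)^{k/2}.
\]
This is effective when the right-hand side is genuinely below $1$, i.e., when $k$ is bounded away from $N/2$; for $k$ close to $N/2$ the denominator $M-2k$ can be as small as $M-N$, and there I would instead use the observation that $N_{\pi}(U)\subseteq S$ forces $U$ to contain at least $\tfrac12\bigl(k-\alpha k-(M-N)\bigr)$ internal pairs of $\pi$, an event of probability at most $2^{O(\alpha k)}(k/M)^{k/2}$ by a standard double-factorial estimate (this is the place in \cite{Goldreich} where the largest sets are handled).

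Combining these per-matching bounds with $\binom{N}{k}\leq(eN/k)^k$ and $\binom{N}{\lfloor\alpha k\rfloor}\leq(eN)^{\alpha k}$, I would split the range of $k$ into two regimes. For small $k$ (say $k\leq N^{1-a}$, where the block $V'\setminus V$ dominates $S$) the base of the per-matching bound is $O(N^{-a})$, so the $k$-th term of the union bound is at most $\bigl[\,C\,e^{1+\alpha}\,N^{\,1+\alpha-ac/2}\,\bigr]^{k}$ for an absolute constant $C$; choosing $\alpha$ small enough that $\alpha<\tfrac{ac}{2}-1$ (possible exactly because $ac>2$) makes the bracket $o(1)$, so this part of the sum is $o(1)$ — the case $k=1$ being just the statement that with probability $o(1)$ no vertex of $V$ has all $c$ of its $G'$-neighbors deleted. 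For the remaining $k$, where $M-N$ is negligible next to $k$, I would use whichever of the two per-matching bounds applies; plugging the double-factorial bound into $\binom{N}{k}\binom{N}{\lfloor\alpha k\rfloor}\cdot(\cdot)^{c}$ gives, for $k$ a constant fraction of $N$, a quantity of the form $2^{(1+O(\alpha)-c/4)\,N}$, which tends to $0$ precisely when $c>4$, i.e., $c\geq5$, while the sub-constant-fraction $k$ are handled similarly and more easily. Summing the two regimes gives $\Pr[\,G\text{ not an }\alpha\text{-expander}\mid V\,]=o(1)$, uniformly in $V$, which proves \lem{expand}.

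I expect the main obstacle to be making the per-matching estimate uniform over the whole range $1\leq k\leq N/2$: the simple "reveal partners one at a time" bound degenerates as $|U|$ approaches $M/2$, and there one must argue, as in \cite{Goldreich}, that being matched into a set only slightly larger than $U$ itself is an exponentially unlikely event (controlled by the number of internal pairs of $U$), and then check that the entropy loss from the $\binom{N}{\lfloor\alpha k\rfloor}$ factor together with the $M\approx N$ slack is still dominated once $c\geq5$. The small-$k$ bookkeeping is routine once one recognizes that $V'\setminus V$ simply plays the role of an extra block of $N^{1-a}$ "forbidden" vertices, and that the exponent $ac/2$ (rather than $ac$) appears because each internal pair of $U$ saves one matched vertex in the reveal argument.
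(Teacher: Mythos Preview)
Your approach is essentially the paper's: fix the deleted set $V'\setminus V$, union-bound over pairs $(U,W)$, translate the bad event in $G$ into ``all $G'$-neighbors of $U$ lie in $S=U\cup W\cup(V'\setminus V)$'', bound this per matching using independence of the $c$ matchings, and split into a small-$k$ regime (where $|V'\setminus V|\leq N^{1-a}$ dominates $|S|$ and the hypothesis $ac>2$ is used) versus a large-$k$ regime (where $(1+\alpha)k$ dominates and $c\geq5$ is used). The roles you assign to the two hypotheses, and the observation that $V'\setminus V$ is just an extra block of $N^{1-a}$ allowed targets, match the paper exactly.

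The one substantive difference is your per-matching estimate. You bound each revealed partner's conditional probability by $(|S|-1)/(M-2k)$, which degenerates as $k\to N/2$ and forces you into the separate internal-pairs argument for sets of near-maximal size. The paper avoids this entirely: when revealing partners of $U$ one at a time, both the numerator and the denominator drop by~$2$ at each step, so the $j$-th factor is $\tfrac{m-(2j-1)}{M-(2j-1)}\leq\tfrac{m}{M}$ (using only $m=|S|\leq M$). This gives the uniform bound
\[
\Pr[N_\pi(U)\subseteq S]\ \leq\ \Bigl(\tfrac{m}{M}\Bigr)^{k/2},\qquad m\leq N^{1-a}+(1+\alpha)k,
\]
valid for every $1\leq k\leq N/2$, so the large-$k$ case is handled by the \emph{same} inequality (now with $m\approx(1+\alpha)k$) and no double-factorial or internal-pairs counting is needed. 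Your route still reaches the conclusion; the paper's just removes the obstacle you anticipated in your last paragraph.
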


\begin{proof}
Let $E_{i, j}$ be the event that in graph $G$, for two sets of vertices $U_1$, $U_2$ with
$|U_1|=i$, $|U_2|=j$, $U_1\cap U_2=\varnothing$, all the neighbors of vertices
$v\in U_1$ belong to $U_1\union U_2$.
By the union bound, the probability that $G$ is not an $\alpha$-expander is upper bounded by 
\begin{equation}
\label{eq:sum-e} 
\sum_{i=1}^{N/2} {N \choose i} {N \choose \alpha i} \Pr [E_{i, \alpha i}] .
\end{equation}

We claim the following:
\begin{prop}
  \label{prop:neighborevent}
\[ \Pr[E_{i, \alpha i}] \leq \left( \frac{1}{N^a} + \frac{(1+\alpha) i}{N} \right)^{\frac{ci}{2}} .\]
\end{prop}

Using this claim, we can upper bound \eq{sum-e} by
\[ \sum_{i=1}^{N/2} {N \choose i} {N \choose \alpha i} 
\left( \frac{1}{N^a} + \frac{(1+\alpha) i}{N} \right)^{\frac{ci}{2}} .\]
If $(1+\alpha) i \leq 3N^{1-a}$, we can upper bound the $i^{\rm th}$ term of this sum by
\[ N^{(1+\alpha) i} 
\left( \frac{4}{N^a}  \right)^{\frac{ci}{2}} =
\frac{2^{ci}}{N^{(\frac{ac}{2}-1-\alpha) i}} . \]
If $ac>2(1+\alpha)$, the sum of all such terms is $o(1)$ because they form a 
geometric progression with common ratio ${2^c}/{N^{ac/2-1-\alpha}}=o(1)$.

Terms with $(1+\alpha) i \leq 3N^{1-a}$ are upper bounded by
\begin{align*} \left( \frac{e N}{i} \right)^i 
\left( \frac{e N}{\alpha i} \right)^{\alpha i} 
\left(  \frac{4(1+\alpha) i}{3N} \right)^{\frac{ci}{2}}
&= 
\left( \frac{i}{N} \right)^{(c/2-1-\alpha)i} \frac{e^{(1+\alpha)i}}{\alpha^{\alpha i}}
\left( \frac{4 (1+\alpha)}{3} \right)^{ci/2} \\
&\leq \left( \frac{1}{2} \right)^{(c/2-1-\alpha)i}
\frac{e^{(1+\alpha)i}}{\alpha^{\alpha i}} 
\left( \frac{4 (1+\alpha)}{3} \right)^{ci/2}.
\end{align*}
If $c\geq 5$ and $\alpha$ is sufficiently small, this is equal to 
$C^i$ with $C=1-\Omega(1)$.  The sum of $C^i$ over all 
$i$ such that $(1+\alpha) i \leq 3N^{1-a}$ is $C^{-\Omega(N^{1-a})}=o(1)$.
\end{proof}

To establish \lem{expand}, it remains to prove \prp{neighborevent}.

\begin{proof}[Proof of \prp{neighborevent}]
To show this, we first observe that $E_{i, \alpha i}$ is equivalent to all neighbors of
vertices $v\in U_1$ in graph $G'$ belonging to $(V' \setminus V) \cup U_1 \cup U_2$.
Let $m= |(V' \setminus V) \cup U_1\cup U_2|$. Then $m\leq \frac{N}{N^a}+(1+\alpha)i$ 
because there are $M-N\leq \frac{N}{N^a}$ 
vertices in $V' \setminus V$ and $(1+\alpha)i$ vertices in $U_1\cup U_2$.
We consider one of $c$ matchings.
The probability that the first vertex $v\in U_i$ is matched to a vertex in 
$(V' \setminus V) \cup U_1\cup U_2$ is equal to $\frac{m-1}{M-1}$. The probability that the 
next vertex $v\in U_1$ is matched to a vertex in $(V' \setminus V) \cup U_1\cup U_2$ is equal to $\frac{m-3}{M-3}$, and so on. Since $|U_1|=i$, there must be at least $i/2$ edges incident 
to a vertex $v\in U_1$. Therefore, the probability that all vertices $v\in U_1$ are matched
to vertices in $(V'\setminus V) \cup U_1\cup U_2$ is
\[ \frac{m-1}{M-1} \frac{m-3}{M-3} \cdots \frac{m-i+1}{M-i+1} \le
\left( \frac{m}{M} \right)^{i/2} .\]
The probability that this happens for all $c$ matchings is 
at most $(\frac{m}{M})^{ci/2}$.
Since $M>N$, we have $\frac{m}{M} < \frac{1}{N^a} + \frac{(1+\alpha) i}{N}$.
The desired result follows.
\end{proof}

\subsection{Lower bound on polynomial degree}
\label{sec:as}

In this section, we prove \lem{scott}. We do not believe that this result is new, but it does not appear in exactly this form in either \cite{Aar02} 
or \cite{Shi02}. Aaronson's argument in \cite{Aar02} is weaker in one place,
while Shi \cite{Shi02} proves a stronger lower bound of $\Omega(n^{1/3})$ 
for the collision problem using a different reduction
to polynomial approximation (which does not seem to be applicable to our
graph problem). Thus, we include the proof of \lem{scott}
for completeness.

The proof uses the following result of Paturi \cite{Pat92}.

\begin{theorem}
\label{thm:pat}
Let $g(x)$ be a polynomial such that $|g(x)|\leq 1$ for all integers $x \in [A,B]$ and
$|g(\zeta)-g(\lfloor \zeta\rfloor)|\geq c$ for some constant $c>0$ and some $\zeta \in [A,B]$.
Then the degree of $g(x)$ is
\[ \Omega\left( \sqrt{(\zeta-A+1)(B-\zeta+1)} \right) .\]
\end{theorem}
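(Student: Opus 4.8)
The plan is to prove the contrapositive: if $d:=\deg g$ is smaller than a suitable constant multiple of $\sqrt{(\zeta-A+1)(B-\zeta+1)}$, then $g$ cannot vary by as much as $c$ between $\zeta$ and $\lfloor\zeta\rfloor$, contradicting the hypothesis. After an affine change of variable I may assume $A=0$; write $n:=B$ and $k:=\lfloor\zeta\rfloor$. Since $(\zeta-A+1)(B-\zeta+1)$ is invariant under $x\mapsto n-x$, I may also assume $\zeta$ lies in the left half $[0,n/2]$, so that $B-\zeta+1=\Theta(n)$ and the target becomes $d=\Omega\bigl(\sqrt{(\zeta+1)\,n}\bigr)$. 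The analytic input is twofold: (i) the theorem of Ehlich--Zeller / Coppersmith--Rivlin on polynomials bounded at equally spaced points, which gives $\lVert g\rVert_{[0,m]}\le C_0$ for a universal constant $C_0$ whenever $|g(i)|\le 1$ for all integers $0\le i\le m$ and $d\le c_1\sqrt m$ (and, since $g$ is determined by its values at $0,\dots,m$ once $d\le m$, one also has $|g(x)|=O(1)$ for every $x$ in the ``bulk'' $[\tfrac14 m,\tfrac34 m]$, because the Lebesgue function of equally spaced interpolation is $O(1)$ away from the endpoints); and (ii) the Markov inequality $\lVert g'\rVert_{[0,m]}\le\tfrac{2d^2}{m}\lVert g\rVert_{[0,m]}$ together with the Bernstein inequality $|g'(x)|\le\tfrac{d}{\sqrt{x(m-x)}}\lVert g\rVert_{[0,m]}$ for $x\in(0,m)$, obtained by rescaling the classical inequalities on $[-1,1]$.

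I would first handle the case in which $k$ is within a constant of the endpoint $A$; this is the only case needed for \lem{scott}, where $A=1$ and $\zeta\in(1,2)$. Here $(\zeta+1)(n+1-\zeta)=\Theta(n)$, so it suffices to prove $d=\Omega(\sqrt n)$. If $d\ge c_1\sqrt n$ we are done; otherwise (i) with $m=n$ gives $\lVert g\rVert_{[0,n]}\le C_0$, and Markov's inequality then yields
\[
|g(\zeta)-g(k)|\;\le\;|\zeta-k|\cdot\lVert g'\rVert_{[0,n]}\;\le\;\frac{2C_0d^2}{n}\,.
\]
As the left side is at least $c$, this forces $d\ge\sqrt{cn/(2C_0)}=\Omega(\sqrt n)$. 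For $\zeta$ in the bulk one runs the same argument with Bernstein replacing Markov (after obtaining $|g|=O(1)$ on a bulk sub-interval around $\zeta$ via the Lebesgue-function remark in (i)): one gets $|g(\zeta)-g(k)|=O\bigl(d/\sqrt{(\zeta+1)(n+1-\zeta)}\bigr)$, which again gives the claimed degree bound.

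The step I expect to be the main obstacle is combining these regimes uniformly and quantitatively --- in particular obtaining the sharp $\sqrt{(\zeta-A+1)(B-\zeta+1)}$, rather than merely $\sqrt{B-A}$ or the distance to the nearer endpoint, when $\zeta$ sits at an intermediate distance from the endpoints and $d$ is not small compared with $\sqrt{B-A}$. This is handled by working on a carefully chosen sub-interval (a jump of size $\ge c$ at $\zeta$ forces $\zeta$ to lie within $O(d^2)$ of some endpoint, which one checks by centering a length-$\Theta(d^2)$ sub-interval at $\zeta$ and applying (i) and Bernstein there to bound $|g'|$ on $[k,k+1]$) and then optimizing the Markov/Bernstein trade-off at the relevant scale. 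Carrying this out is exactly Paturi's argument in \cite{Pat92}; I would follow it, or simply invoke \cite{Pat92}, since only the near-endpoint case of the preceding paragraph is needed for \lem{scott}.
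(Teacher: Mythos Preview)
The paper does not prove this theorem at all: it is stated as a known result of Paturi \cite{Pat92} and used as a black box in the proof of \lem{scott}. Your sketch of the argument via Ehlich--Zeller/Coppersmith--Rivlin plus Markov/Bernstein is the standard route to Paturi's bound, and you yourself correctly observe at the end that one may simply invoke \cite{Pat92}; that is precisely what the paper does.
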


\begin{proof}[Proof of \lem{scott}]

We consider the behavior of $f(M,l)$ when we fix $M=aN$.  Similarly to \cite{Aar02}, we consider two cases:

{\bf Case 1.}
Suppose $|f(aN, l)| \le \frac{4}{3}$ for all $l\in\{1, \ldots, \delta\}$.
Then $g(l)=\frac{3}{4} f(aN, l)$ is a polynomial
satisfying $|g(l)| \le 1$ for all $l\in\{1, \ldots, \delta\}$.
We have $|g(1)-g(2)| \geq \frac{3}{4} \epsilon$, so
by \thm{pat}, the degree of $g$ is $\Omega(\sqrt{\delta})$.

{\bf Case 2.}
Suppose $|f(aN, l)|>\frac{4}{3}$ for some $l\in\{1, \ldots, \delta\}$.
Fix this value of $l$, and let $M_0$ be the smallest value for which $(M_0, l)$ is $\delta$-good.
Let $M_1$ be the largest value for which $(M_1, l)$ is $\delta$-good.
We define $g(x)=f(M_0+xl, l)$. Then $|g(x)| \le 1$ for all 
$x\in\{0, 1, \ldots, m\}$ where $m=\frac{M_1-M_0}{l}$.
Also, $|g(x_0)|>\frac{4}{3}$ for
$x_0=\frac{aN-M_0}{l}$.

We have $M_0 = l \lceil \frac{1}{l} (aN-\delta^{3/2}) \rceil <
aN-\delta^{3/2} + l$ and $M_1 = l \lfloor \frac{1}{l} (aN+\delta^{3/2}) \rfloor > aN+\delta^{3/2} - l$.
Therefore, $m> \frac{2 \delta^{3/2}}{l} - 2$. 
Since $l\leq \delta$, this means that $m\geq 2\sqrt{\delta}-2$.
Also, the above bounds on $M_0$ and $M_1$ imply that 
$\frac{M_0+M_1}{2} \in [aN-\frac{l}{2}, aN+\frac{l}{2}]$.
Therefore, $aN \in [\frac{M_0+M_1-l}{2}, \frac{M_0+M_1+l}{2}]$, 
so $x_0 \in [\frac{m-1}{2}, \frac{m+1}{2}]$.
By \thm{pat}, the degree of $g$ must be $\Omega(m)=\Omega(\sqrt{\delta})$. 
\end{proof}

\subsection{A partition identity}
\label{sec:apxproofdetails}

\begin{proof}[Proof of \prp{part}]
We want to establish the identity
\begin{equation}
\label{eq:part} 
\sum_{L'\colon L''\leq L'\leq L} c_{L', L} = 0
\end{equation}
for any $L,L''$ with $L'' < L$.  We prove the claim by induction on $|L''|-|L|$.

If $|L''|-|L|=1$, then the sum \eq{part} is just $1+c_{L'', L}$.
Since the only way to obtain $L''$ by successive refinements of $L$ consists
of one step $L''<L$, we have $c_{L'', L}=-1$ and $1+c_{L'',L}=0$.

For the inductive case, we can express
\begin{equation}
\label{eq:recursive} 
 c_{L', L}= - \sum_{L_1\colon L'\le L_1<L} c_{L', L_1} 
\end{equation}
where the term with $L_1=L'$ counts the path $L' < L$ and a general term counts the paths $L' < \cdots < L_1 < L$.
If we expand each $c_{L', L}$ on the left hand side of \eq{part}
using \eq{recursive}, we get
\begin{align*}
-\sum_{L'\colon L''\leq L'\leq L} \left( \sum_{L_1\colon L'\leq L_1<L} c_{L', L_1} \right)
&= -\sum_{L',L_1\colon L''\leq L' \leq L_1 < L} c_{L', L_1} \\
&= -\sum_{L_1\colon L''\leq L_1<L} \left( \sum_{L'\colon L''\leq L'\leq L_1} c_{L', L_1} \right).
\end{align*}
Each of the terms in brackets on the right hand side is 0 by the inductive assumption.
\end{proof}

\subsection{Proof of \prp{2}: The final step}
\label{sec:apxproofdetails2}

To complete the proof of \prp{2}, it remains to show the following.
\begin{prop}
\label{prop:final}
Let $F(L')=S_{\alpha_1}(L')\ldots S_{\alpha_m}(L')$ where $\alpha_j\geq 1$ and $\sum_j (\alpha_j-1)\leq k-|L|-1$. Then
\begin{equation}
\label{eq:sum-fl} 
\sum_{L'\colon L'\leq L} c_{L', L} F(L') = 0.
\end{equation}
\end{prop}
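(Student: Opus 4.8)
The plan is to expand $F(L')$ into monomials in the variables $d_{i,b}$ ($1\le i\le k$, $1\le b\le c$) and then collapse the whole alternating sum using \prp{part}. First, by the multinomial theorem, each power $\bigl(\sum_{i\in S}d_{i,b}\bigr)^{\alpha_j}$ is a weighted sum of monomials $\prod_i d_{i,b}^{r_i}$ with $\sum_i r_i=\alpha_j$, and such a monomial occurs in $S_{\alpha_j}(L')=\sum_{S\in L'}\sum_b\bigl(\sum_{i\in S}d_{i,b}\bigr)^{\alpha_j}$ precisely when its support $\{i:r_i>0\}$ is contained in a single class of $L'$. Multiplying out $F(L')=\prod_{j=1}^m S_{\alpha_j}(L')$, a generic term is indexed by a choice $\mathbf{c}=\bigl((b_j,\mathbf{r}^{(j)})\bigr)_{j=1}^m$ and equals (a fixed product of multinomial coefficients) $\times$ (a fixed monomial) $\times\;\prod_j\mathbf{1}[\mathrm{supp}(\mathbf{r}^{(j)})\text{ is inside a class of }L']$. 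The product of indicators is exactly $\mathbf{1}[\pi(\mathbf{c})\le L']$, where $\pi(\mathbf{c})$ denotes the finest partition of $\{1,\dots,k\}$ in which each support $\mathrm{supp}(\mathbf{r}^{(j)})$ lies in one class. Hence $F(L')=\sum_{\mathbf{c}}w(\mathbf{c})\,\mathbf{1}[\pi(\mathbf{c})\le L']$, where $w(\mathbf{c})$ absorbs the coefficient and monomial (both independent of $L'$).

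Then I would substitute this into $\sum_{L'\le L}c_{L',L}F(L')$ and exchange the order of summation, so that for each $\mathbf{c}$ the inner sum is $\sum_{L'\colon\pi(\mathbf{c})\le L'\le L}c_{L',L}$. When $\pi(\mathbf{c})\not\le L$ this range is empty; when $\pi(\mathbf{c})<L$, \prp{part} (applied with $L''=\pi(\mathbf{c})$) makes the inner sum $0$. So only choices $\mathbf{c}$ with $\pi(\mathbf{c})=L$ can contribute, each with inner sum $c_{L,L}=1$, and it suffices to prove that no such $\mathbf{c}$ exists.

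That last impossibility is the heart of the matter and the only place the degree hypothesis is used. Suppose $\pi(\mathbf{c})=L$. Then for every class $T\in L$ the supports contained in $T$ must connect all of $T$, i.e.\ they form a spanning connected hypergraph on $T$; since adjoining a hyperedge of size $s$ can reduce the number of connected components by at most $s-1$, this forces $\sum_{j\colon\mathrm{supp}(\mathbf{r}^{(j)})\subseteq T}\bigl(|\mathrm{supp}(\mathbf{r}^{(j)})|-1\bigr)\ge|T|-1$. Each nonempty support lies in exactly one class of $L$, so summing over all $T\in L$ gives $\sum_{j=1}^m\bigl(|\mathrm{supp}(\mathbf{r}^{(j)})|-1\bigr)\ge\sum_{T\in L}(|T|-1)=k-|L|$. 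Because $|\mathrm{supp}(\mathbf{r}^{(j)})|\le\sum_i r^{(j)}_i=\alpha_j$, we obtain $\sum_j(\alpha_j-1)\ge k-|L|$, contradicting the hypothesis $\sum_j(\alpha_j-1)\le k-|L|-1$. Therefore $\pi(\mathbf{c})=L$ never happens and the sum vanishes.

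The main obstacle I anticipate is organizational rather than conceptual: carefully verifying that the $L'$-dependence of $F(L')$ really is the single indicator $\mathbf{1}[\pi(\mathbf{c})\le L']$ (so \prp{part} applies verbatim), and handling the degenerate cases cleanly --- the empty product $m=0$, and singleton classes of $L$, which contribute nothing to $k-|L|$ and impose no connectivity constraint, so both are consistent with the count above. The hypergraph-connectivity bound itself is elementary, and with the reduction in hand the degree count is a one-liner.
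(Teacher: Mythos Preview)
Your proposal is correct and follows essentially the same approach as the paper's proof: expand $F(L')$ into monomials, identify for each monomial the finest partition $\pi$ (the paper calls it $L''$) whose classes contain all the relevant index sets, rewrite the $L'$-dependence as the indicator $\mathbf{1}[\pi\le L']$, and then apply \prp{part} once you have shown $\pi\neq L$ via the edge/component count $\sum_j(\alpha_j-1)<k-|L|$. The only cosmetic differences are that the paper first strips off factors with $\alpha_j=1$ (since $S_1(L')$ is constant in $L'$) and phrases the connectivity bound via an ordinary graph (a star on each $I_l$ with $\alpha_l-1$ edges) rather than your hypergraph formulation; the resulting partition and inequality are identical.
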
  

\begin{proof}
First, observe that for $\alpha_i=1$, $S_1(L')=\sum_{j=1}^c \sum_{i=1}^k d_{i, j}$, which is 
independent of the partition $L$.
Therefore, $S_1(L')$ is just a multiplicative constant, and it suffices to prove the claim assuming $\alpha_i\geq 2$ for all $i$.

We expand each $F(L')$ in \eq{sum-fl} into a linear combination of terms, where each term is a product of $d_{i, j}$s. Consider one such term.
We can write it as 
\begin{equation}
\label{eq:term} 
\prod_{l=1}^m \prod_{o=1}^{\alpha_l} d_{i_{l, o}, j_{l, o}} 
\end{equation}
where $\prod_{o=1}^{\alpha_l} d_{i_{l, o}, j_{l, o}}$ is the part that comes from expanding
$S_{\alpha_l}(L')$.

We would like to show that the sum of all coefficients of \eq{term} in the expansion
of \eq{sum-fl} is 0.
In order for a term \eq{term} to appear in the expansion of $F(L')$, for each $l$, 
$I_l= \{i_{l, 1}, \ldots, i_{l, \alpha_l}\}$ must be contained 
in one class $S$ of the partition $L'$. 

We consider two cases. If, for some $i$,
$I_l$ is not contained in one class $S$ of the partition $L$,
then $I_l$ is also not contained in one class of any other $L'$,
because all the $L'$s in \eq{sum-fl} are refinements of $L$. Then the term 
\eq{term} does not appear in the expansion of any $F(L')$.

Therefore, we can restrict to terms for which each 
$I_l$ is contained in one class $S$ of the partition $L$.
We consider a partition $L''$ defined as follows.
Let $G$ be a graph with vertex set 
$\{1, \ldots, k\}$ and edges from $i_{l, 1}$ to $i_{l, 2}$,
$\ldots$, $i_{l, \alpha_l}$, for each $l$. 
The classes of $L''$ are the connected components of $G$.
In other words, $L''$ is the finest partition such that, for all $l$, all elements of $I_l$ are in the same class.

If a term \eq{term} appears in the expansion of $F(L')$, then
$L''\leq L'$. 
Therefore, the sum of all the coefficients of \eq{term} 
is
\[ \sum_{L'\colon L''\leq L'\leq L} c_{L', L} .\]
We claim that $L'' < L$; then this sum is 0 by \prp{part}.
Clearly, $L''\leq L$, so it remains to prove that $L\neq L''$.

Observe that the graph $G$ used to define $L''$ has
$\mu = \sum_{j} (\alpha_j-1)$ edges. Therefore, it has at least 
$k-\mu$ connected components, i.e., $|L''|\geq k-\mu$. Since
$\mu\leq k-|L|-1$, we have $|L|\leq k-\mu-1$, so $L\neq L''$.
\end{proof}

\end{document}